\setlist[enumerate]{leftmargin=.5in}
\setlist[itemize]{leftmargin=.5in}
\crefname{hypothesis}{Hypothesis}{Hypotheses}
\title{Optimal Dorfman Group Testing for Symmetric Distributions}
\author{Nicholas C. Landolfi\thanks{Department of Computer Science, Stanford University, Stanford, CA 94305, USA
  (\email{lando@stanford.edu}).}
  \and Sanjay Lall\thanks{Department of Electrical Engineering, Stanford University, Stanford, CA 94305, USA
    (\email{lall@stanford.edu}).}
      }
\DeclarePairedDelimiter{\set}{\{}{\}}
\DeclarePairedDelimiterX{\Set}[2]{\{}{\}}{#1 \nonscript\;\delimsize\vert\nonscript\; #2}
\DeclareMathAlphabet{\mathbfsf}{\encodingdefault}{\sfdefault}{bx}{n}
\newcommand{\R}{\mathbb{R}}
\newcommand{\E}{\mathbb{E}}
\newcounter{edremcounter}
\def\mathword#1{\mathop{\mathrm{#1}}}
\newcommand{\Prob}{\mathword{Prob}}
\newcommand{\nnz}{\mathword{nnz}}
\newcommand{\num}[1]{\lvert#1\rvert}
\newcommand{\0}{\mathbf{0}}
\newcommand{\pow}[1]{\mathcal{P}(#1)}
\newcommand{\Disj}{\mathword{Disj}}
\newcommand{\IntParts}{\mathcal{L}}
\newcommand{\MultFns}{\mathcal{M}}
\newcommand{\MF}{\mathcal{M}}
\newcommand{\presup}[2]{{}^{#1}\!#2}
\newcommand{\Mstar}{M^\star}
\newcommand{\muiter}[1]{\mu^\star_{#1}}
\newcommand{\iiter}[1]{i^\star_{#1}}
\newcommand{\argmin}{\mathword{argmin}}
\newcommand{\extraproof}[1]{} 
\newcommand{\extraforproof}[1]{} 
\newcommand{\extraproofalternative}[1]{#1} 
\newcommand{\extraclarification}[1]{} 
\let\bbl\Bigl
\let\bbr\Bigr
\begin{document}

\maketitle

\begin{abstract}
We study Dorfman's classical group testing protocol in a novel setting where individual specimen statuses are modeled as exchangeable random variables.
We are motivated by infectious disease screening.
In that case, specimens which arrive together for testing often originate from the same community and so their statuses may exhibit positive correlation.
Dorfman's protocol screens a population of $n$ specimens for a binary trait by partitioning it into non-overlapping groups, testing these, and only individually retesting the specimens of each positive group.
The partition is chosen to minimize the expected number of tests under a probabilistic model of specimen statuses.
We relax the typical assumption that these are independent and identically distributed and instead model them as exchangeable random variables.
In this case, their joint distribution is symmetric in the sense that it is invariant under permutations.
We give a characterization of such distributions in terms of a function $q$ where $q(h)$ is the marginal probability that any group of size $h$ tests negative.
We use this interpretable representation to show that the set partitioning problem arising in Dorfman's protocol can be reduced to an integer partitioning problem and efficiently solved.
We apply these tools to an empirical dataset from the COVID-19 pandemic.
The methodology helps explain the unexpectedly high empirical efficiency reported by the original investigators.

\end{abstract}

\begin{keywords}
probabilistic group testing,
Dorfman procedure,
probabilistic symmetries,
exchangeable random variables,
set partitioning problem,
integer partitions,
disease screening,
COVID-19 pandemic
\end{keywords} 

\begin{MSCcodes}
60G09, 
62E10, 
62H05, 
62P10, 
90-08, 
90C39, 
90C90  
\end{MSCcodes}

\section{Introduction}
Group testing is widely used to conserve resources while performing large-scale disease screening.
Logistical considerations often lead to the use of Dorfman's simple two-stage adaptive procedure in practice.
This protocol is usually based on probabilistic analyses of disease prevalence arising from models of specimen statuses as mutually independent random variables.
In this paper, we generalize and study the case in which the statuses are modeled as exchangeable, but not necessarily independent, random variables.

Given a population of $n$ specimens to screen for a binary trait, the group testing framework allows for several specimens to be pooled and tested together as a group.
The group tests positive if any of its individual specimens is positive.
The group tests negative if, and only if, all of its specimens are negative.
Numerous protocols using this testing capability have been proposed, of which Dorfman's two-stage adaptive procedure is the earliest, simplest, and most widely used.
In this protocol, the population is partitioned into non-overlapping groups and these are tested in the first stage.
If a group of size $h > 1$ tests negative, each of its $h$ specimens is immediately determined negative and $h-1$ tests are saved.
If a group tests positive, each of its specimens is retested individually in the second stage and determined according to the outcome of its individual test.
The key question is how to partition the specimens.

Since tests are saved only when a group tests negative and these group test outcomes depend on the distribution and prevalence of positive specimens, a standard approach specifies a probabilistic model of specimen statuses and finds a partition to minimize the expected number of tests used.
In general, both specifying the model and finding the partition are difficult.
The first requires a parameterization, and the second a computation, which grows exponentially in the number of specimens to be tested.
Historically, this complexity has been avoided via simple probabilistic models arising from a strong assumption of independence.

It is desirable from both a theoretical and practical point of view to alleviate the independence assumption.
From a theoretical point of view, it is interesting to consider how one might efficiently find partitions for more complicated distributions.
From a practical point of view, it is natural to suppose that the statuses of specimens arriving together for testing may be correlated because they originate from the same family, living place, or workplace and the disease is contagious.
Indeed, a recent large-scale study cited this phenomenon when explaining the failure of current theoretical tools to predict observed empirical test savings \cite{barak2021lessons}.
It is a pleasant surprise, therefore, that one can model specimen statuses as exchangeable while maintaining interpretability of the probabilistic model and tractability of the computation.

\subsection{Contributions}

Before listing contributions, we roughly frame the mathematical problem that we address in this paper.
For details, see \Cref{section:problem}. 
We are given $n$ binary random variables $x_1, \dots, x_n$ with some underlying joint probability distribution.
We seek a partition $G$ of the set $\set{1, \dots, n}$ to minimize $\sum_{H \in G} f(H)$ where
\[
  f(H) = \begin{cases}
    1 & \text{if } \num{H} = 1 \\
    1 + \num{H}\Prob(\max_{i \in H} x_i = 1) & \text{otherwise}
  \end{cases}
\]
This approach minimizes the expected number of tests used by Dorfman’s procedure.

When individual specimen statuses are modeled as exchangeable random variables their joint distribution is symmetric in the sense that it is invariant under permutations of its arguments.
Our first contribution is to characterize such a symmetric distribution in terms of a function $q$, where $q(h)$ is the probability that a group of size $h$ tests negative.
The representation $q$ is key to finding a partition to minimize the expected number of tests.

Our second contribution is to use this characterization, along with a natural reduction of additive and symmetric set partitioning problems to additive integer partitioning problems, to show how to efficiently compute optimal partitions for exchangeable statuses.
In contrast to additive \emph{set} partitioning problems, additive \emph{integer} partitioning problems are tractable and several efficient algorithms are known for their solution.
For details, see \Cref{section:problems}.

Lastly, we apply these tools to an empirical dataset from the COVID-19 pandemic.
The data we use indicate empirical efficiency exceeding that predicted by the classical theory, which models statuses as independent and identically distributed.
Our tools partially explain this empirical efficiency and also indicate a different and more efficient partition than that used by the original investigators.
We make our numerical implementation available \cite{landolfi2023symgt}.

In summary, we study Dorfman's two-stage adaptive group testing procedure for the case of exchangeable specimen statuses.
Our three contributions are:
\begin{enumerate}
  \item a characterization of symmetric distributions over binary outcomes
  \item a method to efficiently find optimal testing partitions under such distributions
  \item a numerical experiment applying these tools to an empirical COVID-19 dataset
\end{enumerate}

\paragraph{Outline}
In the following two subsections we give further background and introduce our notation.
In \Cref{section:problem}, we formalize Dorfman's two-stage adaptive group testing protocol.
In \Cref{section:symmetricdistributions}, we discuss and characterize symmetric distributions.
In \Cref{section:problems}, we study the structure of symmetric and additive set partitioning problems and present tools for their solution.
In \Cref{section:experiment}, we numerically apply these tools to a COVID-19 dataset.
In \Cref{section:prior_work}, we review prior work on group testing.
We conclude in \Cref{section:conclusion} with some future directions.

\subsection{Background}
We provide background on group testing, probabilistic symmetries and partitioning problems.
Each is a highly developed field with an extensive literature.

\subsubsection{Group testing}
In 1943, Dorfman initiated the study of group testing, also called \textit{pooled testing}, by proposing his original methodology for disease screening \cite{dorfman1943detection}.
The field has grown considerably since.
Today, it can be distinguished along several axes.
We outline these further in \Cref{section:prior_work} below.
Here, we briefly characterize the setting of this paper.

\paragraph{Our setting} 
We study Dorfman's two-stage, adaptive procedure in the probabilistic, finite-population setting with binary specimen statuses and binary, noiseless, unconstrained tests.
The central novelty is in modeling the specimen statuses as exchangeable random variables.
We are aware of only one other article studying restricted forms of exchangeability \cite{lendle2012group}.
We also emphasize that we use the term symmetric, see \Cref{section:symmetricdistributions}, to describe the joint distribution of the statuses and not, as others have done \cite{sobel1971symmetric,du2000combinatorial}, to describe the testing model.

The key prior work for situating our contribution is Dorfman's original article \cite{dorfman1943detection} and Hwang's follow-up \cite{hwang1975generalized}.
Both considered Dorfman's \emph{adaptive} two-stage procedure in a \emph{probabilistic} setting, with \emph{noiseless} \emph{binary} test results.
Hwang moved from Dorfman's \emph{infinite} population setting to a \emph{finite} population setting and generalized Dorfman's probabilistic model to allow for specimen-specific positive status probabilities.
Our work generalizes Dorfman's in a similar but parallel way. 
Rather than dropping the \emph{identically distributed} assumption as Hwang does, we drop the \emph{independence} assumption.
We visualize these steps in \Cref{figure:assumptions}.

\begin{figure}[htbp]\label{figure:assumptions}
  \centering
  \includegraphics[width=\textwidth]{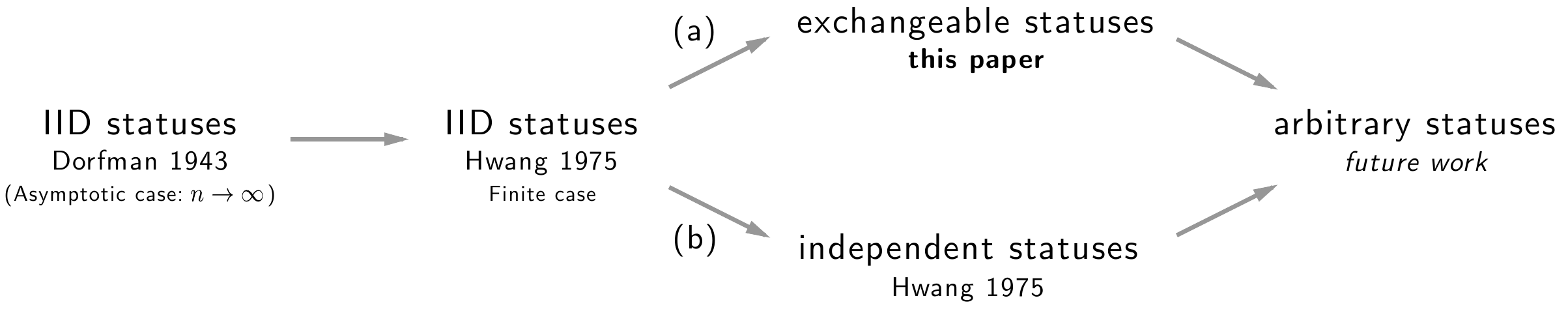}
  \vspace{-2.0em}
  \caption{
    Assumptions for Dorfman's two-stage adaptive group testing procedure with noiseless binary test results.
    (a) drops the independence assumption whereas (b) drops the identically distributed assumption.
  }
\end{figure}

\subsubsection{Probabilistic symmetries}
Exchangeable random variables fit within the broad study of probabilistic, or distributional, symmetries \cite{kallenberg2005probabilistic}.
An \emph{exchangeable} sequence of random variables is one whose joint distribution is \emph{invariant} under permutations \cite{aldous1985exchangeability}.
This condition is strictly \emph{weaker} than assuming that the sequence is IID.
Although we focus on \emph{finite} sequences, the concept first gained prominence when applied to \emph{infinite} sequences.

\paragraph{Infinite exchangeable sequences}
These are associated with an influential and well-known theorem of de Finetti, subsequently generalized by Hewitt and Savage \cite{hewitt1955symmetric}.
See \cite{kallenberg2005probabilistic} for a modern treatment.
Roughly speaking, \emph{de Finetti's theorem} says that the joint distribution of every \emph{infinite} exchangeable sequence can be expressed as a mixture of IID joint distributions \cite{definetti1931funzione,kallenberg2005probabilistic}.
Conversely, \emph{any} such \emph{IID mix} is exchangeable.
Permutation invariance, therefore, is \emph{characterized} by a representation which can be interpreted as a \emph{prior} distribution over the parameters of an infinite IID model.
Freedman \cite{freedman1995some} gives an informal discussion of this result and its relevance to the Bayesian, or subjective, interpretation of probability.

\paragraph{Finite exchangeable sequences}
de Finetti's characterization fails for \emph{finite} sequences \cite{diaconis1977finite}.
A more delicate treatment can be given, however, which approximates his result \cite{diaconis1980finite}.
In the sequel, we call the distributions of \emph{finite} exchangeable sequences \emph{symmetric}.
It is well-known that such distributions are mixtures of distributions of \emph{urn sequences} \cite{kallenberg2005probabilistic}. 
See \Cref{proposition:finitefinetti} below for a precise statement.
Our contribution is a separate characterization of symmetric distributions over \emph{binary} domains.
See \Cref{theorem:zeromarginals}.
We are not aware of this specialized result appearing explicitly in prior literature.
In the context of Dorfman's procedure, it is the key object which aids interpretation of the probabilistic model.

\subsubsection{Partitioning problems}
In the sequel, we encounter both \emph{set} and \emph{integer} partitioning problems.
Each has been extensively studied \cite{balas1976set,hwang2011partitions,engel2014optimal,onn2015some} and can be viewed as a particular combinatorial optimization problem \cite{lawler1976combinatorial,korte2018combinatorial}.
We mention that \emph{neither} is exactly the well-known ``partition" problem described by Karp in his classic paper \cite{karp1972reducibility,garey1979computers}.

\paragraph{Sets} 
In \emph{set} partitioning problems, we seek a partition of a finite set to minimize a given real-valued objective function.
Such a partition is sometimes called \emph{unlabeled} to distinguish it from an \emph{allocation}, which has a prespecified number of elements \cite{hwang2011partitions}.
For the many applications of these problems, see \cite{balas1976set} and \cite{hwang2011partitions}.
Dorfman's procedure partially motivated one historical line of work \cite{hwang1975generalized,hwang1981optimal,hwang1985optimal,hwang2011partitions}.
The basic difficulty is that the number of partitions of a finite set of size $n$, the so-called $n$th \emph{Bell number} \cite{becker1948arithmetic,rota1964number}, grows quickly with $n$.
Still, these problems have standard integer linear programming formulations when the objective is additive \cite{balas1976set,schrijver1998theory,schrijver2004combinatorial} and
 other structured objectives have been studied \cite{hwang1981optimal,anily1991structured,hwang2011partitions,lamarche-perrin2014generic}.

\paragraph{Integers} 
In \emph{integer} partitioning problems, we seek a partition of a positive integer \cite{hardy1979introduction,andrews1984theory} to minimize a given real-valued objective function.
As with set partitioning problems, the basic difficulty is that the number of partitions of a positive integer $n$ is large, even for moderate $n$.
We know of two outstanding articles which study these problems under additive objectives \cite{engel2014optimal,onn2015some}.
We discuss these in \Cref{subsection:solutions}.
Integer partitioning arises in this paper from a set partitioning problem whose objective is \emph{symmetric}.
See \Cref{section:problems}.
Although this reduction is natural, we are not aware of prior work explicitly making the connection.
Detecting and exploiting symmetry is an active area of research in combinatorial optimization \cite{margot2010symmetry,pfetsch2019computational}.

\subsection{Preliminaries}

For finite sets $P$ and $D$, let $D^P$ denote the set of functions mapping $P$ to $D$.
Given $z:P \to D$ and $H \subset P$, denote the restriction of $z$ to
$H$ by $z_{|H}:H \to D$.  
Denote the constant zero function with any domain by $\0$.  
For any finite set $P$ and $u \in D^P$ with $0\in D$, define $\nnz(u) =
\num{\Set{i \in P}{u(i) \neq 0}}$, the number of points at which $u$ is nonzero.
Denote the empty set by $\varnothing$.
Denote the union of a set of sets $E$ by $\cup E$.

For $f: D^J \to C$, given $g: J \to H$, define $f^g:  D^H \to C$ via
$f^g(x) = f(x \circ g)$ for all $x \in D^H$.
For $z: P \to D$, given $d \in D$, define $z^{-1}(d)  = \Set{i \in P}{z(i) = d}$, the preimage of $d$ under $z$.
Given a set $F$ of subsets of a set $P$ and a function $h: P \to P$, define $\presup{h}{F}$ by
$\presup{h}{F} = \Set{\Set{h(i)}{i \in H}}{H \in F}$.
Hence $\presup{h}{F}$ is the set of images under $h$ of the sets in $F$.

\paragraph{Probability}

Given a distribution $p: D^P \to [0,1]$, the probability of an event $A \subset D^P$ is $\sum_{z \in A} p(z)$.
We denote it by $\Prob(A)$ when $p$ is clear from context.
Given a set $H \subset P$, the \emph{marginal} of $p$ \emph{over} $H$ is the function $p_H: D^H \to [0,1]$ defined by $p_H(u) = \sum_{z \mid z_{|H} = u} p(z)$.

If $r: D^P \to [0,1]$ is also a distribution, the \textit{cross-entropy} $H(r, p)$ of $p$ relative to $r$ is $-\sum_{z \in D^P} r(z) \log p(z)$ and the entropy $H(r)$ of $r$ is $-\sum_{z \in D^P} r(z) \log r(z)$ as usual.
The \textit{Kullback-Leibler divergence} $d_{kl}(r, p)$ of $p$ relative to $r$ is defined as usual by $d_{kl}(r, p) = H(r, p) - H(r)$.
The empirical distribution $\hat{p}: D^P \to [0,1]$ of a dataset $z^1, \dots, z^m$ in $D^P$ is defined as usual by $\hat{p}(z) = (1/m)\num{\Set{i \in \set{1, \dots, m}}{z^i = z}}$.

\paragraph{Set powers}
Given a set $S$, the \emph{power set} $\pow{S}$ of $S$ is the set of all subsets of $S$.
The power set \emph{of} $\pow{S}$ is the set of all \emph{sets} of subsets of $S$.
We denote the nonempty elements of this set whose members are nonempty and disjoint by $\Disj(S)$.

\paragraph{Set partitions}

A \emph{partition} $F = \set{F_1, \dots, F_r}$ of a \emph{set} $S$ is a set of nonempty, pairwise disjoint subsets of $S$ whose union is $S$.
That is, $F_i \cap F_j = \varnothing$ whenever $i \neq j$ and $\cup_{i = 1}^{r} F_i = S$.
Given a set $P$, cost function $J: \Disj(P) \to \R$ and any nonempty $S \subset P$, we call a \emph{partition} $F^\star$ of $S$ \emph{optimal} for $S$ under $J$ if $J(F^\star) \leq J(F)$ for all partitions $F$ of $S$.

\paragraph{Integer partitions}

A \emph{partition} $\lambda = (\lambda_1, \dots, \lambda_r)$ of the positive \emph{integer} $m$ is a nonincreasing finite sequence of positive integers whose sum is $m$ \cite{hardy1979introduction,andrews1984theory}.
The terms $\lambda_i$ are called \emph{parts}.
The \emph{multiplicity} of an integer in $\lambda$ is the number of times it appears as a part \cite{macdonald1998symmetric}.
We associate to $\lambda$ a \emph{multiplicity function} $\mu$ so that $\mu(h)$ is the multiplicity of the integer $h$ in $\lambda$.
We denote the integer partitions of $m$ by $\IntParts(m)$ and the corresponding multiplicity functions by $\MultFns(m)$.
There is a bijection between $\MultFns(m)$ and $\IntParts(m)$.
We denote the set $\cup_{i = 1}^{m} \MultFns(i)$ by $\MultFns(1, \dots, m)$.

\paragraph{Set and integer partitions}
Given a partition $F$ of a nonempty \textit{set} $S$, we can construct a partition $\lambda_F$ of the positive \emph{integer} $m = \num{S}$.
The parts of $\lambda_F$ are the sizes of the elements of $F$, in nonincreasing order as usual.
This integer partition $\lambda_F$ has a multiplicity function $\mu_F$, where $\mu_F(h)$ is the number of parts of size $h$ in $F$. 
We also call $\mu_F$ the \emph{multiplicity function} of the set partition $F$.
Any $F \in \Disj(S)$ is a partition of the set $\cup F \subset S$, and so has corresponding integer partition $\lambda_F \in \IntParts(k)$ and multiplicity function $\mu_F \in \MultFns(k)$ where $k = \num{\cup F}$.
Given $F, G \in \Disj(S)$, we call $F$ and $G$ \emph{multiplicity equivalent} if $\mu_F = \mu_G$. 
This holds if and only if $\lambda_F = \lambda_G$.
Note that possibly $\cup F \neq \cup G$.
If $F \cap G = \varnothing$, then $\mu_{F \cup G} = \mu_F + \mu_G$.

\section{Problem formulation}
\label{section:problem}
We have a \emph{population} $P$ of $n$ specimens to test for a binary trait.
A specimen is either \emph{negative} or \emph{positive}, which we denote by $0$ and $1$, respectively.
We model these $n$ statuses as random variables $\{x_i\}_{i \in P}$ with distribution $p: \set{0,1}^P \to [0, 1]$.
Here each outcome is a binary function on $P$ and $p$ assigns a probability to each outcome.

\subsection{Group testing}
We determine the statuses via testing.
We may test several specimens together and observe that either (a) all of the specimens are negative or (b) at least one of the specimens is positive.
A \emph{group} is a nonempty subset $H \subset P$.
Its status is defined to be $S_H(x) = \max_{i\in H} x_i$.
We say that the group $H$ tests negative if and only if $S_H(x) = 0$.
In other words, all of its members are negative.
A group $H$ tests positive means its status $S_H(x) = 1$.
There is no noise in the observed outcomes of individual or group tests.

\subsection{Dorfman's adaptive procedure}
Dorfman \cite{dorfman1943detection} proposed determining specimen statuses via a two-stage procedure.
The population is first partitioned into groups and these are tested.
If a group tests negative, each specimen in the group is determined negative.
If a group tests positive, each specimen in the group is retested individually, and is determined positive or negative depending on the result of its individual test.

Given the statuses $x$ and a group $H \subset P$, the number of tests required to determine the status of every specimen in $H$ is
\begin{equation}\label{eq:numtests}
  T_H(x) = \begin{cases}
    1 & \text{if } \num{H} = 1 \\
    1 + \num{H} S_H(x) & \text{otherwise}
  \end{cases}
\end{equation}
The mean of this random variable is then
\begin{equation}\label{eq:expectedtests}
  \E T_H(x) = \begin{cases}
    1 & \text{if } \num{H} = 1 \\
    1 + \num{H}\Prob(S_H(x) = 1) & \text{otherwise}
  \end{cases}
\end{equation}
The first case of \cref{eq:numtests} records that a group with one member requires only one test.
Otherwise, a group $H$ of size two or more requires one group test and possibly $\num{H}$ additional individual tests.
These additional tests are required only if the group status is positive.

  Dorfman's procedure may be applied to any nonempty \emph{subpopulation} $S \subset P$.
  Given a \emph{partition} $F$ of $S$, the number of tests used to determine the status of every specimen in $S$ is $C(F,x) = \sum_{H \in F} T_H(x)$, and its expectation is
\begin{equation}\label{eq:totalexpectedtests}
  \textstyle
  \E C(F,x) = \sum_{H \in F} \E T_H(x)\end{equation}
which is the sum of the expected number of tests needed for each group in $F$.
To determine the status of every specimen in the population, one is interested in a partition of $P$.

A \emph{pooling} of $P$ is a partition $G = \set{G_1, \dots, G_r}$ of $P$, where each group $G_i \subset P$.
A natural cost for a pooling $G$ is the expected number of tests $\E C(G, x)$ it uses.
A natural measure of its \emph{efficiency} is $n/\E C(G, x)$.
Here $n$ is the cost of testing each specimen individually.

\subsection{Minimizing expected number of tests}
It is natural to seek a partition which minimizes the expected number of tests required to determine the status of all specimens.
Or, equivalently, to seek a partition which maximizes efficiency.
\begin{problem}\label{prob:mintests}
  Given a distribution $p: \set{0, 1}^P \to [0, 1]$, find a partition $G$ of the population $P$ to minimize the expected number of tests $\E C(G, x)$.
\end{problem}
We are interested, therefore, in solving a set partitioning problem.
Without further assumptions, the problem is computationally challenging because of the large number of parameters required to specify $p$ and the large number of partitions.
Consequently, one is interested in particular distribution classes with succinct representations and efficient algorithms.

Hwang \cite{hwang1975generalized} showed that if specimen statuses are modeled as \emph{independent} random variables, then $p$ is determined by $n$ real parameters and \Cref{prob:mintests} can be efficiently solved.
We show herein that similar results hold if the statuses are instead modeled as \emph{exchangeable}.

\section{Symmetric distributions}
\label{section:symmetricdistributions}
Given a permutation $g$ on $P$, we can apply it to outcomes $x\in\{0,1\}^P$ in the natural way via composition to give
$x \circ g$. This also induces a corresponding rearrangement
$p^g$
of a distribution $p$ on $\{0,1\}^P$.
Call $p$ \emph{symmetric} if $p = p^g$ for all permutations $g$ on $P$.
The statement that $p$ is symmetric is equivalent to the probabilistic statement that the individual specimen status random variables are \emph{exchangeable} \cite{diaconis1977finite, kingman1978uses, aldous1985exchangeability, kallenberg2005probabilistic}.

\begin{remark}\label{rmk:nnz}
  Given $x$ and $y$ in $\set{0, 1}^P$, relate $x \sim y$ if there is a permutation $g$ so that $x = y \circ g$.
  The relation $\sim$ is an equivalence relation, and we have $x\sim y$
  if and only if $\nnz(x) = \nnz(y)$. The resulting equivalence classes
  are the sets of functions in $\set{0, 1}^P$ having the same number of
  nonzero values. A distribution $p$ is symmetric if and only if it is
  invariant on the equivalence classes, that is $p(x) = p(y)$ whenever
  $x \sim y$.
  In other words, $p$ is symmetric if and only if there exists a function $w: \set{0, \dots, n} \to [0,1]$ such that $p(x) = w(\nnz(x))$ for all $x \in \{0,1\}^P$.
  The value $w(k)$ gives the probability of a \textit{particular} outcome with $k$ nonzero values, for $k = 0, \dots, n$. 
  Since there are ${n \choose k}$ such outcomes with $k$ nonzero values, the probability of the \emph{event} of observing an outcome with $k$ nonzero values is ${n \choose k}w(k)$.
\end{remark}

\subsection{Examples of symmetric distributions}
\label{subsection:examples}
Here, we include a few notable examples.

\subsubsection{IID random variables have symmetric distributions}

The joint \emph{IID distribution} of a set $\set{x_i}_{i \in P}$ of independent and identically distributed random variables is symmetric.
In the context of group testing, the value $\rho = \E \sum_{i = 1}^{n} x_i/n$ is called the \emph{prevalence rate} \cite{dorfman1943detection}.

\subsubsection{Mixtures of symmetric distributions are symmetric}

The set of symmetric distributions is convex.
%
For example, the mixture distribution $p$ defined by $p(x) = (1-\mu) r(x) + \mu s(x)$ for all $x \in \{0,1\}^P$ is symmetric, where $\mu$ is in $[0,1]$ and  $r,s$ are symmetric distributions.

Mixtures of IID distributions provide examples of symmetric distributions that model random variables which are \emph{not} independent.
For an extreme but easy to see case, suppose $r$ and $s$ of the previous paragraph have prevalence rates $0$ and $1$ respectively, and $\mu = 1/2$. 
Let $i$ and $j$ be distinct elements of $P$.
The probability of the event $\Set{x}{x(i) = 1}$ is $1/2$.
The \emph{conditional} probability of this event \emph{given} the event $\Set{x}{x(j) = 1}$ is $1$.
Consequently the two events are dependent, and so the random variables $x_i$ and $x_j$ are \emph{not} independent.
In other words, for such a distribution, if one specimen is positive then so are all the others.

\subsubsection{Simple random sampling is symmetric}

Suppose we draw all 0-1-labeled balls from an urn \emph{without} replacement, and record the labels $x_1, \dots, x_n$.
With $P = \set{1, \dots, n}$, the set $\set{x_i}_{i \in P}$ is exchangeable.
If $k \leq n$ of the balls were marked 1, then this set has symmetric distribution $r_k$ on $\{0,1\}^P$ defined by $r_k(x) = 1/{n \choose k}$ if $\nnz(x) = k$ and 0 otherwise.
A classic result \cite{definetti1972probability,kendall1967finite,ericson1976bayesian,diaconis1977finite} says that every symmetric distribution is a mixture of $r_0, \dots, r_n$.

\begin{proposition}\label{proposition:finitefinetti}
  Suppose $p$ is a distribution on $\set{0,1}^P$.
  Then $p$ is symmetric if and only if there exists a function $\alpha: \set{0, \dots, n} \to [0,1]$ such that $\sum_{k = 0}^{n} \alpha(i) = 1$ and $p(x) = \sum_{k = 0}^{n} \alpha(k) r_k(x)$ for all $x \in \{0,1\}^P$.
\end{proposition}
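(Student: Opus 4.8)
The plan is to read both implications straight off the characterization in Remark~\ref{rmk:nnz}, which says that a distribution is symmetric precisely when it is a function of $\nnz$.

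For the ``if'' direction, I would first note that each $r_k$ is itself a symmetric distribution: it is nonnegative, it depends on $x$ only through $\nnz(x)$ and hence is constant on the equivalence classes of Remark~\ref{rmk:nnz}, and it sums to one since there are ${n \choose k}$ outcomes $x$ with $\nnz(x) = k$, each assigned probability ${n \choose k}^{-1}$. Given $\alpha:\set{0,\dots,n}\to[0,1]$ with $\sum_{k=0}^n \alpha(k) = 1$, the function $p = \sum_{k=0}^n \alpha(k) r_k$ is a convex combination of the $r_k$, so it is again a distribution ($\sum_x p(x) = \sum_k \alpha(k)\sum_x r_k(x) = \sum_k \alpha(k) = 1$) and, since the set of symmetric distributions is convex (Subsection~\ref{subsection:examples}), it is symmetric.

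For the ``only if'' direction, suppose $p$ is symmetric. By Remark~\ref{rmk:nnz} there is a function $w:\set{0,\dots,n}\to[0,1]$ with $p(x) = w(\nnz(x))$ for all $x \in \set{0,1}^P$. I would then set $\alpha(k) = {n \choose k}\, w(k)$ for $k = 0,\dots,n$ and verify the three required properties. Nonnegativity is immediate. The quantity $\alpha(k)$ is exactly the probability of the event of observing an outcome with $k$ nonzero values, so $0 \le \alpha(k) \le 1$; and since these events over $k = 0,\dots,n$ partition $\set{0,1}^P$, we get $\sum_{k=0}^n \alpha(k) = \sum_{x} p(x) = 1$. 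Finally, fixing $x$ with $\nnz(x) = j$ and using that $r_k(x) = {n \choose k}^{-1}$ when $k = j$ and $r_k(x) = 0$ otherwise, we obtain $\sum_{k=0}^n \alpha(k) r_k(x) = \alpha(j)\,{n \choose j}^{-1} = w(j) = p(x)$.

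I do not expect a real obstacle: the entire content is supplied by Remark~\ref{rmk:nnz}, and the only thing to get right is the bookkeeping factor ${n \choose k}$ that converts the per-\emph{outcome} probability $w(k)$ into the per-\emph{event} mixing weight $\alpha(k)$, together with the routine checks that $\alpha$ lands in $[0,1]$ and sums to one. The one place worth stating carefully is, in the ``if'' direction, that each $r_k$ is a legitimate symmetric distribution, so that the convexity remark genuinely applies.
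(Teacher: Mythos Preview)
Your proposal is correct and follows essentially the same approach as the paper: the paper does not give a formal proof but simply remarks that the result ``is easy to see using \Cref{rmk:nnz}'' and records the relation $\alpha(k) = {n \choose k} w(k)$, which is exactly the conversion you carry out. Your write-up supplies the routine verifications the paper leaves implicit, and there are no gaps.
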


This fact is easy to see using \Cref{rmk:nnz}.
The function $\alpha$ is related to the function $w$ by $\alpha(k) = {n \choose k}w(k)$ for $i = 0, \dots, n$.
Hence, $\alpha(k)$ is the probability of observing an outcome with $k$ nonzero values.
Both $w$ of \Cref{rmk:nnz} and $\alpha$ of \Cref{proposition:finitefinetti} are \emph{representations} of a symmetric distribution.
We use $\alpha$ for visualization purposes in \Cref{figure:models_analysis} below.

\subsubsection{Shuffling symmetrizes}
\label{subsubsection:shuffling}

Given any distribution $t$ on $\{0,1\}^P$, \emph{not} necessarily symmetric, define the \emph{symmetric} distribution $p$ on $\{0,1\}^P$ by $p(x) = (1/n!) \sum_{g | g \text{ is a bijection}} t^g(x)$ for all $x \in \{0,1\}^P$.
We call $p$ the \emph{symmetrization} of $t$.
If $t$ is symmetric, then $p = t$.
In other words, shuffling creates symmetry.
For symmetric distributions this shuffling has no effect.

\subsection{Characterizing symmetric distributions}
First we record a straightforward lemma.
Roughly speaking, it says that all same-size marginals of a symmetric distribution agree.

\begin{lemma}\label{lemma:symmetricmarginals}
  Suppose $p: \{0,1\}^P \to [0, 1]$ is a distribution.
  Then $p$ is symmetric if and only if $p_H = (p_J)^g$ for all bijections $g: J \to H$, where $H, J \subset P$.
\end{lemma}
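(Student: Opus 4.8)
The plan is to prove both directions directly from the definitions, using the characterization of symmetry from \Cref{rmk:nnz} in the forward direction and a permutation-extension argument in the reverse direction.

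For the forward direction, suppose $p$ is symmetric and let $g: J \to H$ be a bijection between subsets $H, J \subset P$. I would fix an arbitrary $u \in \{0,1\}^H$ and unfold the definitions: $(p_J)^g(u) = p_J(u \circ g) = \sum_{z \mid z_{|J} = u \circ g} p(z)$, while $p_H(u) = \sum_{z \mid z_{|H} = u} p(z)$. The goal is to match these two sums term by term. The natural bridge is to extend $g$ to a permutation $\bar g$ of all of $P$ — possible because $g$ is a bijection $J \to H$, so $|J| = |H|$, hence $|P \setminus J| = |P \setminus H|$, and we may pick any bijection between the complements and glue. Then I would check that $z \mapsto z \circ \bar g$ is a bijection from $\{z \mid z_{|H} = u\}$ onto $\{z \mid z_{|J} = u \circ g\}$ (the condition $z_{|H} = u$ becomes $(z \circ \bar g)_{|J} = u \circ g$ since $\bar g$ maps $J$ onto $H$ via $g$), and since $p$ is symmetric, $p(z) = p^{\bar g}(z) = p(z \circ \bar g)$, so the sums agree. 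Alternatively, and perhaps more cleanly, one can invoke \Cref{rmk:nnz}: $p(z) = w(\nnz(z))$, and reindexing by a bijection preserves $\nnz$ on the complement while the restricted part is pinned down, so the counts match.

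For the reverse direction, suppose $p_H = (p_J)^g$ for all bijections $g: J \to H$ with $H, J \subset P$. I want to show $p = p^h$ for every permutation $h$ of $P$. Take $H = J = P$ and $g = h$; then the hypothesis reads $p_P = (p_P)^h$. But the marginal of $p$ over all of $P$ is just $p$ itself (the defining sum $p_P(u) = \sum_{z \mid z_{|P} = u} p(z)$ has a single term, $z = u$), and likewise $(p_P)^h = p^h$ by the definition of $f^g$. Hence $p = p^h$, which is exactly symmetry. So the reverse direction is essentially immediate once one observes that $P$ itself is an allowed choice of subset.

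I do not expect a serious obstacle here; this is a routine bookkeeping lemma. The one point requiring care is the forward direction's handling of the \emph{partial} bijection $g: J \to H$ — specifically, verifying that extending to a full permutation $\bar g$ and restricting back is consistent, i.e.\ that the extension does not interfere with the restriction to $H$. Writing $P = J \sqcup J^c$ and choosing the extension to be any bijection $J^c \to H^c$, one checks that for $z$ with $z_{|H} = u$ the composite $z \circ \bar g$ restricted to $J$ equals $u \circ g$, independent of the choice of extension, and that the map is invertible with inverse given by composing with $\bar g^{-1}$. This is the only step where a wrong choice of indices could cause trouble, so I would state it explicitly before concluding.
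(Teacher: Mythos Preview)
Your proposal is correct and follows essentially the same approach as the paper: for the forward direction you extend the partial bijection $g:J\to H$ to a full permutation $\bar g$ of $P$, use symmetry to replace $p$ by $p^{\bar g}$, and reindex the defining sum for the marginal; for the converse you specialize to $H=J=P$, which is exactly what the paper does. The alternative you mention via \Cref{rmk:nnz} would also work but is not needed.
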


\extraproof{
\begin{proof}
For the \emph{only if} direction, suppose $p$ is symmetric.
Let  $g: J \to H$ be a bijection.
Extend $g$ to a bijection $\bar{g}: P \to P$.
Suppose $u \in \{0,1\}^H$.
Since $\bar g$ is a permutation, use the symmetry of $p$ to conclude that  $p = p^{\bar g}$ and hence
\[
p_H(u)
=
\sum_{z \mid z_{|H} = u} p(z)
=
\sum_{z \mid z_{|H} = u} p^{\bar g}(z)
\]
Now we claim
\[
\sum_{z \mid z_{|H} = u} p^{\bar g}(z)
=
\sum_{z \mid z_{|J} = u \circ g} p(z)
\]
This holds because $p^{\bar{g}}(z) = p(z \circ \bar{g})$ and
\begin{align*}
  \{z \circ \bar g \mid z_{|H} =u \} 
  &= \{y \mid y(j) = (u \circ g)(j) \text{ for all } j \in J
  \}\\
  &= \{
  y \mid y_{|J} = u \circ g
  \}
\end{align*}
Therefore we have
\[
  p_H(u) 
= p_J(u \circ g) 
= (p_J)^g(u)
    \]
Since this holds for any $u$ we have $p_H = (p_J)^g$ as desired.
    The converse holds by choosing $J$ and $H$ equal to $P$.
    \end{proof}
}
An immediate consequence of \Cref{lemma:symmetricmarginals} is that every marginal of a symmetric distribution is symmetric.
To see this, take $H = J$ in the \emph{only if} direction.
This corresponds to the statement that every subset of a set of exchangeable random variables is exchangeable.

\subsubsection{Representation via marginals}
\label{subsubsection:representationviamarginals}

We now look at a specific representation of symmetric distributions,
in terms of a function $q$ such that $q(h)$ is the \textit{marginal}
probability that \textit{any} group of size $h$ tests negative, for $h
= 0, \dots, n$.

\begin{theorem}\label{theorem:zeromarginals}
  Suppose $p: \set{0,1}^P \to [0,1]$ is a distribution.
  Then $p$ is symmetric if and only if there exists a function $q: \set{0, \dots, n} \to [0, 1]$ such that
  \begin{equation}\label{eq:q}
    p_H(\0) = q(\num{H}) \quad \text{for all } H \subset P
  \end{equation}
  We take the convention $p_{\varnothing}(\0) = q(0) = 1$.
\end{theorem}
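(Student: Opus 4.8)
The plan is to prove the two directions separately, leaning on \Cref{rmk:nnz} and \Cref{lemma:symmetricmarginals}. For the \emph{only if} direction, suppose $p$ is symmetric. Given $H \subset P$ with $\num{H} = h$, the marginal $p_H$ is itself a distribution on $\{0,1\}^H$, so $p_H(\0)$ is well defined. The key observation is that if $H, J \subset P$ have the same size, then any bijection $g: J \to H$ sends $\0_H$ to $\0_J$ under the correspondence $u \mapsto u \circ g$; since $\0_H \circ g = \0_J$, applying \Cref{lemma:symmetricmarginals} gives $p_H(\0) = (p_J)^g(\0_H) = p_J(\0_H \circ g) = p_J(\0)$. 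Thus $p_H(\0)$ depends only on $\num{H}$, and we may define $q: \{0, \dots, n\} \to [0,1]$ by $q(h) = p_H(\0)$ for any $H$ with $\num{H} = h$, with $q(0) = p_\varnothing(\0) = 1$ by convention. This establishes \cref{eq:q}.

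For the \emph{if} direction, suppose such a $q$ exists; we must show $p$ is symmetric. By \Cref{rmk:nnz} it suffices to show that $p(x)$ depends only on $\nnz(x)$. Fix $x \in \{0,1\}^P$ and let $H = x^{-1}(1)$, so $\num{H} = \nnz(x) =: k$. The event that a specific outcome equals $x$ can be written in terms of marginals over $H$: indeed $x$ is the unique outcome whose restriction to $H$ is the all-ones function $\mathbf{1}_H$ and which is zero off $H$ — equivalently, $x$ is determined by specifying, for the complementary set $H^c$, that $x_{|H^c} = \0$. So $p(x) = p_{H^c}(\0) = q(n - k)$. Hence $p(x) = q(n - \nnz(x))$ for every $x$, which depends only on $\nnz(x)$; by \Cref{rmk:nnz}, $p$ is symmetric. (As a consistency check, this identifies the function $w$ of \Cref{rmk:nnz} as $w(k) = q(n-k)$.)

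The main subtlety — and the step I would write out most carefully — is the second-to-last one: justifying $p(x) = p_{H^c}(\0)$. The definition of the marginal gives $p_{H^c}(\0) = \sum_{z \mid z_{|H^c} = \0} p(z)$, and the claim is that the only term in this sum with nonzero contribution matching $x$ is $x$ itself — more precisely, that the summation set $\{z \mid z_{|H^c} = \0\}$ is exactly $\{x\}$ when $H = x^{-1}(1)$, because any $z$ vanishing on $H^c$ has $z^{-1}(1) \subseteq H = x^{-1}(1)$, but we actually want equality, so one must instead take $H = x^{-1}(1)$ and note $z_{|H^c} = \0$ forces $z = x$ only if additionally $z_{|H} = \mathbf{1}$; the clean fix is to observe that $\{z : z_{|H^c} = \0\}$ has exactly $2^{\num{H}}$ elements indexed by $z_{|H}$, of which $x$ is one, so $p(x)$ is a \emph{summand} of $p_{H^c}(\0)$, not equal to it. This means the argument above needs adjustment: one should instead induct on $\nnz(x)$, using inclusion–exclusion to peel off marginals, or use the $w$-representation directly. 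So the honest route is: from \cref{eq:q} with $H^c$ in place of $H$, expand $q(n-k) = p_{H^c}(\0) = \sum_{S \supseteq \text{(something)}} \dots$ and invert. I expect the cleanest proof actually runs the \emph{if} direction by showing $q$ determines $w$ via a binomial/Möbius inversion, $w(k) = \sum_{j=0}^{n-k} (-1)^j \binom{n-k}{j} q(n-k-j) / \text{(appropriate count)}$ — wait, more simply: $p_{H^c}(\0) = \sum_{z_{|H} \text{ arbitrary}} p(z)$ and by symmetry-to-be-proven each such $p(z)$ would equal $w(\nnz(z_{|H}))$; since we cannot assume that yet, the correct inductive claim is that $p$ restricted to outcomes supported in $H$ is symmetric, proved by downward induction on $\num{H}$ from $n$, with base case $p_P(\0) = q(n)$ giving the single outcome $\0$, and the inductive step recovering $p$ on weight-$k$ outcomes inside $H$ from $p_{H'}(\0)$ for subsets $H' \supseteq H^c$. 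That induction, carefully set up, is the crux.
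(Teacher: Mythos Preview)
Your \emph{only if} direction is correct and matches the paper's argument exactly: both use \Cref{lemma:symmetricmarginals} with the observation $\0 \circ g = \0$.

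For the \emph{if} direction, you correctly catch your own error: $p(x) \neq p_{H^c}(\0)$ in general, since $p_{H^c}(\0)$ sums $p(z)$ over all $2^{k}$ outcomes $z$ supported in $H = x^{-1}(1)$, of which $x$ is only one. Your instinct that the fix is an induction recovering $w$ from $q$ is also right, but the sketch you give (downward induction on $\num{H}$, or a Möbius inversion whose form you do not pin down) is not carried through, and as written does not constitute a proof.

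The paper closes this gap in the most direct way. Set $I_x = x^{-1}(0)$, so $\num{I_x} = n - \nnz(x)$ and, by hypothesis, $p_{I_x}(\0) = q(n-\nnz(x))$. Grouping the sum $p_{I_x}(\0) = \sum_{z \mid z_{|I_x} = \0} p(z)$ by $\nnz(z)$ gives
\[
q(n-k) = \sum_{i=0}^{k} \sum_{\substack{z \mid z_{|I_x} = \0 \\ \nnz(z) = i}} p(z),
\]
and the $i=k$ inner sum is the single term $p(x)$. One then defines $w(0)=q(n)$ and, recursively, $w(k) = q(n-k) - \sum_{i=0}^{k-1}\binom{k}{i} w(i)$, and proves $p(x)=w(\nnz(x))$ by strong induction on $k=\nnz(x)$: under the induction hypothesis each of the $\binom{k}{i}$ inner terms at level $i<k$ equals $w(i)$, so the displayed identity rearranges to $p(x)=w(k)$. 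This is precisely the inversion you were reaching for; the point you were missing is that it is a \emph{forward} strong induction on $\nnz(x)$, not a downward induction on a containing set, and the combinatorial count is simply $\binom{k}{i}$.
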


\begin{proof}
  First, we address the \emph{only if} direction.
  The existence of $q$ is equivalent to the statement that $p_H(\0) = p_{J}(\0)$ whenever $\num{H} = \num{J}$, where $H, J \subset P$.
  As a result, the \textit{only if} direction follows directly from \Cref{lemma:symmetricmarginals} since, using any bijection $g: J \to H$, we have
  \[
    p_H(\0) = (p_J)^g(\0) = p_J(\0 \circ g) = p_J(\0)
  \]
For the \emph{if} direction,  first recall from \Cref{rmk:nnz} that $p$ is symmetric if and only if there exists a function $w$ such that
  \begin{equation}\label{eq:w}
    p(x) = w(\nnz(x)) \quad \text{for all } x \in \{0,1\}^P
  \end{equation}
  Second, we claim that for \emph{any} distribution $p$ on $\{0,1\}^P$ and set $H \subset P$ with $\num{H} = h$
  \begin{equation}\label{eq:marginalzeroexpansion}
    \textstyle
    p_{H}(\0) = \sum_{z|z_{|H} = \0} p(z) = \sum_{i = 0}^{n-h}
    \bbl( \; \sum_{z \mid z_{|H} = \0 \text{ and } \nnz(z) = i} p(z) \; \bbr)
  \end{equation}
  This holds by rearranging the terms in the first sum and grouping them
  according to the number of nonzero entries.

  Suppose by hypothesis that there exists a function $q$ satisfying \cref{eq:q}.
  We will use $q$ to construct a function $w$ satisfying \cref{eq:w} via a linear recursion.
  First define $w(0) = q(n)$.
  Then, for $k = 1, \dots, n$, recursively define 
  \[
    \textstyle
    w(k) = q(n-k) - \sum_{i = 0}^{k - 1}{k \choose i} w(i)
  \]
  We claim that $w$ so constructed satisfies \cref{eq:w}.
  We will show this via strong induction on $k = \nnz(x)$, the number of nonzero values of the outcome $x$.

  First, we introduce a bit of notation which we use to rewrite \cref{eq:marginalzeroexpansion} in terms of $q$.
  Given $x$ in $\{0,1\}^P$ define $I_x = x^{-1}(0)$.
  We claim that
   for any such $x$, we have
  \begin{equation}\label{eq:marginalzeroexpansion3}
    \textstyle
    q(n-\nnz(x)) = \sum_{i = 0}^{\nnz(x)}
    \bbl( \; \sum_{z \mid z_{|I_x} = \0 \text{ and }\nnz(z) = i} p(z) \; \bbr)
  \end{equation}
  This holds by taking $H = I_x$ in \cref{eq:marginalzeroexpansion}, recognizing $n - \num{I_x} = \nnz(x)$, and recognizing $p_{I_x}(\0) = q(n-\nnz(x))$.
  If $\nnz(x) > 0$, we can rearrange \cref{eq:marginalzeroexpansion3} to give
  \begin{equation}\label{eq:rearrangement}
    \textstyle
    p(x) = q(n - \nnz(x)) - \sum_{i = 0}^{\nnz(x)-1} 
    \bbl( \; \sum_{z \mid z_{|I_x} = \0 \text{ and }\nnz(z) = i} p(z) \; \bbr)
  \end{equation}
  This holds because the last term in the outermost sum of \cref{eq:marginalzeroexpansion3} is a sum which itself consists of only one term, specifically
  $\Set{z}{z_{|I_x} = \0 \text{ and } \nnz(z) = \nnz(x)} = \set{x}$.
  For the base case of the induction, $\nnz(x) = 0$, we have 
  $w(0) = p(x)$ for all $x$ with $\nnz(x) = 0$.
  This holds because
  $\nnz(x) = 0$ if and only if $x = \0$, and $q(n) = p_{P}(\0) = p(\0)$.

  Next, suppose the induction hypothesis,
  that for all $z$ with $\nnz(z) < k$, we have $p(z) = w(\nnz(z))$.
  Then for any $x$ in $\{0,1\}^P$ with $\nnz(x) = k > 0$, we have
  \begin{equation}\label{eq:rearrangement2}
    \textstyle
    p(x) = q(n-k) - \sum_{i =0}^{k-1} {k \choose i} w(i)
  \end{equation}
  To see this, take $\nnz(x) = k$ in \cref{eq:rearrangement} and observe that for each of $i = 0, \dots, k-1$, the set
  $\Set{z}{z_{|I_x} = \0 \text{ and } \nnz(x) = i}$
  has ${k \choose i}$ members.
  Each element of the $i$th set has probability $w(\nnz(z))$ under the induction hypothesis. Since $k=\nnz(x)$, \cref{eq:rearrangement2} gives $p(x)$ as a function purely of $\nnz(x)$.
  Consequently, $w$ satisfies \cref{eq:w} as desired.
\end{proof}

\extraclarification{ 
  We reiterate that the functions $q$ of \cref{theorem:zeromarginals} and $w$ of \Cref{rmk:nnz} are two \textit{representations} for a symmetric distribution $p$.
  Either one can readily be computed from the other.
}

We might consider more general sets $D$ containing $0$ other than $\set{0,1}$.
Our definition of symmetry generalizes naturally to distributions on $D^P$ and analogs of \Cref{lemma:symmetricmarginals} and the \emph{only if} direction of \Cref{theorem:zeromarginals} hold.
The \emph{if} direction of \Cref{theorem:zeromarginals} fails, however, if $\num{D} > 2$.
There, $q$ is not a representation.

\subsection{Modeling with symmetric distributions}
We focus here on modeling exchangeable random variables by estimating the parameters of their distribution from empirical data via the principle of maximum likelihood.
We apply this technique in \Cref{section:experiment} below.

We emphasize, however, that modeling is a complex problem with many approaches.
Such approaches include, for example, Bayesian analysis and latent variables.
The choice of approach often depends on available side information.
For a nice recent article, see \cite{niepert2014exchangeable}.

\subsubsection{Maximum likelihood estimation}
\label{subsubsection:fitting}
As is well known, a distribution minimizing the Kullback-Leibler divergence with respect to the \emph{empirical distribution} of a dataset also maximizes the \emph{likelihood} of the dataset.
The following proposition says that one best approximates a distribution in the Kullback-Leibler sense by \textit{symmetrizing} it.
See \Cref{subsubsection:shuffling}.
For notational convenience, denote the equivalence class of $x \in \set{0,1}^P$ by $[x]$.
See \Cref{rmk:nnz}.

\begin{proposition}\label{prop:dklapprox}
  Suppose $r: \{0,1\}^P \to [0,1]$ is a distribution and define the distribution $p^\star: \{0,1\}^P \to [0,1]$ by $p^\star(x) := (1/n!) \sum_{g | g \text{ is a bijection}} r^g(x) = (1/\num{[x]}) \sum_{z \in [x]} r(z)$.
  Then $d_{kl}(r, p^\star) \leq d_{kl}(r, s)$ for all symmetric distributions $s: \set{0,1}^P \to [0,1]$.
\end{proposition}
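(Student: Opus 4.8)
The plan is to reduce the statement to the non‑negativity of the Kullback–Leibler divergence. Since $d_{kl}(r, s) = H(r, s) - H(r)$ and the entropy $H(r)$ does not depend on $s$, minimizing $d_{kl}(r, s)$ over symmetric $s$ is the same as minimizing the cross‑entropy $H(r, s) = -\sum_{z \in \set{0,1}^P} r(z) \log s(z)$ over symmetric $s$. So it suffices to show $H(r, p^\star) \le H(r, s)$ for every symmetric distribution $s$.

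The crux is the identity $H(r, s) = H(p^\star, s)$ for \emph{every} symmetric $s$. To see this, I would partition the defining sum according to the equivalence classes of \Cref{rmk:nnz}, writing $H(r, s) = -\sum_{[x]} \sum_{z \in [x]} r(z) \log s(z)$. Because $s$ is symmetric it is constant on each class (\Cref{rmk:nnz}), so $\log s(z)$ pulls outside the inner sum, leaving $\sum_{z \in [x]} r(z) = \num{[x]}\, p^\star(x)$ by the second formula for $p^\star$ in the statement. Reassembling, $-\sum_{[x]} \num{[x]}\, p^\star(x) \log s(x) = -\sum_{z} p^\star(z) \log s(z) = H(p^\star, s)$, since $p^\star$ too is constant on classes. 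Along the way one should note that $p^\star$ is genuinely a distribution (its normalization is immediate) and is symmetric, being a symmetrization (see \Cref{subsubsection:shuffling}); that wherever $r$ is positive so is $p^\star$, so the usual $0\log 0$ conventions cause no trouble; and that if $s$ vanishes somewhere $r$ does not, then $H(r, s) = +\infty$ and the inequality is trivial.

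Applying the identity with $s = p^\star$ gives $H(r, p^\star) = H(p^\star, p^\star) = H(p^\star)$. Hence, for any symmetric $s$,
\[
  d_{kl}(r, s) - d_{kl}(r, p^\star) = H(r, s) - H(r, p^\star) = H(p^\star, s) - H(p^\star) = d_{kl}(p^\star, s) \ge 0,
\]
where the final inequality is the non‑negativity of the Kullback–Leibler divergence (Gibbs' inequality), itself a consequence of Jensen's inequality applied to $-\log$. This is the desired conclusion.

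I do not expect a serious obstacle: the argument is essentially a one‑line computation once the class‑average structure of $p^\star$ is matched against the class‑constancy of symmetric distributions. The only points demanding care are the degenerate cases — an $s$ whose support is strictly smaller than that of $r$, and the conventions for $0\log 0$ — together with checking that $p^\star$ indeed lies in the feasible set of symmetric distributions, so that the inequality is not vacuous.
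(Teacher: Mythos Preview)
Your argument is correct. The paper itself does not prove this proposition: it simply states the result and refers the reader to Pavlichin \cite{pavlichin2015nearest}, remarking afterward that the order of the arguments matters and that $p^\star$ is the $M$-projection of $r$ onto the symmetric distributions. So there is no in-paper proof to compare against.

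For what it is worth, the route you take is exactly the standard one for $M$-projections onto a linear family: the key identity $H(r,s)=H(p^\star,s)$ for symmetric $s$ is precisely the statement that $p^\star$ and $r$ have the same expectations against class-constant functions, and from it the Pythagorean relation $d_{kl}(r,s)=d_{kl}(r,p^\star)+d_{kl}(p^\star,s)$ drops out. Your handling of the edge cases (support of $s$, the $0\log 0$ convention, and verifying that $p^\star$ is itself symmetric) is appropriate.
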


For this result and others, see Pavlichin \cite{pavlichin2015nearest}.
The order of the arguments of $d_{kl}$ matters.
With the order here, $p^\star$ is called the \textit{M-projection} of $r$ onto the set of symmetric distributions.

One can interpret $p^\star$ of \Cref{prop:dklapprox} as evenly distributing the total probability mass $r$ assigns to each equivalence class among the members of that class.
When the distribution being approximated is the empirical distribution of a dataset, we can easily compute $p^\star$ by counting the number of samples in each of the $n+1$ equivalence classes.
This gives $p^\star(x) = (1/{n \choose \nnz(x)}) \sum_{z | \nnz(z) = \nnz(x)} \frac{1}{m} \num{\Set{i \in \set{1, \dots, m}}{z^i = z}}$ for a dataset $z^1, \dots, z^m$ in $\set{0,1}^P$.

\section{Symmetric and additive set partitioning problems}
\label{section:problems}
A set partitioning problem simplifies considerably if its cost is symmetric and additive.
In this case, it reduces to an \textit{integer} partition problem which can be solved efficiently by any of several methods.
\Cref{prob:mintests} has an additive cost. 
The cost is also symmetric when the distribution is symmetric.

\subsection{Symmetric cost gives an integer partition problem}
\label{subsubsection:setintegerreduction}

Call a function $J: \Disj(P) \to \R$ \emph{symmetric} if $J(F) = J(\presup{g}{F})$ for all $F \in \Disj(P)$ and bijections $g$ on $P$.
For such cost functions $J$, all multiplicity equivalent members of $\Disj(P)$ have the same cost.

\begin{lemma}\label{lemma:integerequivalence}
  Suppose $J: \Disj(P) \to \R$.
  \extraproofalternative{ 
    Then $J$ is symmetric if and only if
    $\mu_F = \mu_{G} \implies J(F) = J(G)$ for all $F, G \in \Disj(P)$.
    Here $\mu_F$, $\mu_G$ are the multiplicity functions of $F$ and $G$.
  }
  \extraforproof{ 
  Then $J$ is symmetric if and only if
  \begin{equation}\label{eq:integerequivalence}
    \mu_F = \mu_{G} \implies J(F) = J(G) \quad \text{for all } F, G \in \Disj(P) 
  \end{equation}
  Here $\mu_F$, $\mu_G$ are the multiplicity functions of $F$ and $G$.
  }
\end{lemma}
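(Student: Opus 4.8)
The plan is to prove both implications directly, translating back and forth between bijections on all of $P$ and bijections between the individual members of two multiplicity-equivalent collections.

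For the easy direction, suppose $\mu_F = \mu_G$ implies $J(F) = J(G)$ for all $F, G \in \Disj(P)$. Given any $F \in \Disj(P)$ and any bijection $g$ on $P$, I would first note that $\presup{g}{F} \in \Disj(P)$ and that $\mu_{\presup{g}{F}} = \mu_F$: since $g$ is injective it sends each member $H \in F$ to a set of the same cardinality, and it sends distinct (hence disjoint) members of $F$ to distinct disjoint sets, so the multiset of member sizes is unchanged. Applying the hypothesis to $F$ and $\presup{g}{F}$ gives $J(F) = J(\presup{g}{F})$, which is precisely the statement that $J$ is symmetric.

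For the converse, suppose $J$ is symmetric and take $F, G \in \Disj(P)$ with $\mu_F = \mu_G$. The key step is to construct a single bijection $g$ on $P$ with $\presup{g}{F} = G$; symmetry of $J$ then immediately yields $J(F) = J(\presup{g}{F}) = J(G)$. To build $g$: since $\mu_F = \mu_G$ we have $\lambda_F = \lambda_G$, so $F$ and $G$ have the same number of members and the same multiset of member sizes; enumerate them as $F = \set{F_1, \dots, F_r}$ and $G = \set{G_1, \dots, G_r}$ with $\num{F_i} = \num{G_i}$ for each $i$. Choose a bijection $g_i : F_i \to G_i$ for each $i$; because the $F_i$ are pairwise disjoint and the $G_i$ are pairwise disjoint, these glue to a single bijection from $\cup F$ onto $\cup G$.

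Here is the one place that needs a little care: $\cup F$ and $\cup G$ need not be equal, only equinumerous, since each has cardinality equal to the common value of $\sum_i \num{F_i}$. Consequently $P \setminus \cup F$ and $P \setminus \cup G$ also have the same cardinality, so I can pick any bijection between them and adjoin it to the glued map to obtain a bijection $g$ on all of $P$. By construction $\presup{g}{F} = \set{\set{g(i) : i \in F_j} : j = 1, \dots, r} = \set{G_1, \dots, G_r} = G$, which finishes the argument. I expect this extension-to-all-of-$P$ step --- reconciling that $F$ and $G$ may occupy different parts of $P$ --- to be the only subtlety; everything else is bookkeeping with the definitions of $\presup{g}{F}$, $\mu_F$, and $\lambda_F$ recorded in the preliminaries.
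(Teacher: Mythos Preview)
Your proof is correct and follows essentially the same approach as the paper's: both directions match, constructing the bijection $g$ on $P$ by matching equal-size members $F_i \to G_i$, gluing on $\cup F$, and extending to all of $P$. Your explicit justification that $P\setminus\cup F$ and $P\setminus\cup G$ are equinumerous is a detail the paper leaves implicit, but otherwise the arguments coincide.
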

\extraproof{
\begin{proof}
  First, we address the \textit{if} direction.
  Suppose $J$ satisfies \cref{eq:integerequivalence}. 
  Let $F \in \Disj(P)$ and let $g: P \to P$ be a bijection.
  We claim that $F$ and $\presup{g}{F}$ have the same multiplicity function.
  This holds because the set $\Set{g(i)}{i \in H}$ has the same size as $H$ for any $H \in F$.
  By assumption on $J$, $\mu_F = \mu_{\presup{g}{F}}$ implies $J(F) = J(\presup{g}{F})$.
  Since this holds for any $F$ and $g$, we conclude that $J$ is symmetric.

  For the \textit{only if} direction, suppose $J$ is symmetric. 
  Let $F, G \in \Disj(P)$ with $\mu_F = \mu_G$.
  Equivalently, $\lambda_F = \lambda_G$.
  Then $F$ and $G$ have the same size. 
  Denote this number by $p$.

  We begin by constructing an appropriate bijection on $P$.
  Number $F = \{F_1, \dots, F_p\}$ and $G = \{G_1, \dots, G_p\}$ so that 
  \[ 
    \num{F_i} \geq \num{F_j} 
    \text{ and } 
    \num{G_i} \geq \num{G_j} 
    \text{ whenever } 
    1 \leq i < j \leq p
  \]
  We claim that $\num{F_i} = \num{G_i}$ for all $i = 1, \dots, p$.
  This holds because $\lambda_F = \lambda_{G}$.
  Consequently, there are $p$ bijections $h_i: F_i \to G_i$ for $i = 1, \dots, p$.
  Define $\tilde{g}: \cup F \to \cup G$ so that
  \[
    \tilde{g}_{|F_i} = h_i \quad \text{for all } i = 1, \dots, p
  \]
  Then $\tilde{g}$ is a bijection.
  Next, let $g$ be any bijection on $P$ whose restriction on $\cup F$ is $\tilde{g}$.
  By construction, $G = \presup{g}{F}$. 
  Hence $J(G) = J(\presup{g}{F})$, and so by symmetry $J(G) = J(F)$.
\end{proof}
}

\subsubsection{The induced integer partition problem}
\label{subsubsection:inducedintegerpartitionproblem}

Suppose $J: \Disj(P) \to \R$ is \textit{symmetric}.
Then \Cref{lemma:integerequivalence} says that there is a function $J_{\MF}: \MultFns(1, \dots, n) \to \R$ satisfying
\begin{equation}\label{eq:JMF}
  J(F) = J_{\MF}(\mu_F) \quad \text{for all } F \in \Disj(P)
\end{equation}
Moreover, a \emph{partition} $G$ of $P$ minimizes $J$ among all \emph{partitions} if and only if its multiplicity function $\mu_G$ minimizes the restriction of $J_{\MF}$ to $\MultFns(n)$.
Given such a minimizer $\mu^\star$, it is easy to construct a partition of $P$ whose multiplicity function is $\mu^\star$.
Hence, we can find a class of multiplicity equivalent set partitions optimal under $J$ by solving the following problem.

\begin{problem}\label{prob:integerpartition}
  Given $K: \MultFns(n) \to \R$, find multiplicity function $\mu$ to minimize $K(\mu)$.
\end{problem}

We call \Cref{prob:integerpartition} an \textit{integer} partition problem because $\MultFns(n)$ is in one-to-one correspondence with $\IntParts(n)$.
In the next section, we study additional structure on $J$, inherited by $J_{\MF}$, that enables one to avoid exhaustive enumeration in solving \Cref{prob:integerpartition}.

\subsection{Symmetric and additive cost gives an additive integer partition problem}
As usual, we call a function $J: \Disj(P) \to \R$ \textit{additive} if $J(F \cup G) = J(F) + J(G)$ for all disjoint $F, G \in \Disj(P)$.
\extraforproof{ 
This condition is equivalent to the existence of a function $j: \pow{P} \to \R$ so that
\begin{equation}\label{eq:additivecharacterization}
  \textstyle
  J(F) = \sum_{H \in F} j(H) \quad \text{for all } F \in \Disj(P)
\end{equation}
The function $j$ is related to the function $J$ by $j(H) = J(\set{H})$ for all $H \subset P$.

}
A function $J: \Disj(P) \to \R$ can be symmetric but not additive, and vice versa.
When $J$ is both symmetric and additive, we have the following characterization.

\begin{lemma}\label{lemma:symmetricadditivecharacterization}
  Suppose $J: \Disj(P) \to \R$.
  Then $J$ is symmetric and additive if and only if there exists a function $h: \set{1, \dots, n} \to \R$ so that 
  \extraproofalternative{ 
  $J(F) = \sum_{H \in F} h(\num{H})$ for all $F \in \Disj(P)$.
  }
  \extraforproof{ 
  \begin{equation}\label{eq:symmetricadditivecharacterization}
    \textstyle
  J(F) = \sum_{H \in F} h(\num{H}) \quad \text{for all } F \in \Disj(P)
  \end{equation}
  }
\end{lemma}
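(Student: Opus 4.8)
The plan is to prove both implications directly from the definitions. For the \emph{if} direction, suppose such an $h$ exists and set $J(F) = \sum_{H \in F} h(\num{H})$. First I would verify additivity: if $F, G \in \Disj(P)$ are disjoint then $F \cup G \in \Disj(P)$, and since $F \cap G = \varnothing$ no block is double-counted, so $\sum_{H \in F \cup G} h(\num{H}) = \sum_{H \in F} h(\num{H}) + \sum_{H \in G} h(\num{H})$, i.e.\ $J(F\cup G) = J(F) + J(G)$. Then I would verify symmetry: for any bijection $g$ on $P$, the assignment $H \mapsto \Set{g(i)}{i \in H}$ is a bijection from $F$ onto $\presup{g}{F}$ preserving cardinality (because $g$ is injective), hence $J(\presup{g}{F}) = \sum_{H \in F} h(\num{H}) = J(F)$.

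For the \emph{only if} direction, suppose $J$ is symmetric and additive. I would first isolate the values of $J$ on singletons: for nonempty $H \subset P$ the set $\set{H}$ lies in $\Disj(P)$, so define $j(H) = J(\set{H})$. Given $F = \set{H_1, \dots, H_r} \in \Disj(P)$, write $F$ as the union of the pairwise disjoint members $\set{H_1}, \dots, \set{H_r}$ of $\Disj(P)$ and apply additivity $r - 1$ times to obtain $J(F) = \sum_{i=1}^{r} j(H_i) = \sum_{H \in F} j(H)$. Next I would use symmetry on singletons: whenever $\num{H} = \num{H'}$, extend a bijection $H \to H'$ to a bijection $g$ on $P$ (possible since the complements have equal size), note $\presup{g}{\set{H}} = \set{H'}$, and conclude $j(H') = J(\set{H'}) = J(\presup{g}{\set{H}}) = J(\set{H}) = j(H)$. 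Thus $j(H)$ depends only on $\num{H}$, so $h(k) := j(H)$ for any $H$ with $\num{H} = k$ is a well-defined function on $\set{1, \dots, n}$, and $J(F) = \sum_{H \in F} h(\num{H})$ as desired.

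The argument is essentially bookkeeping, so I do not expect a genuine obstacle; the points needing care are (i) that $\Disj(P)$ excludes the empty set of sets, so the decomposition of $F$ into singletons must keep at least one block and additivity must be applied inductively, never to the empty partition, and (ii) that a bijection between two equal-size subsets of $P$ always extends to a bijection of $P$. One could instead deduce the result from \Cref{lemma:integerequivalence} by observing that symmetry together with additivity makes the induced $J_{\MF}$ additive over multiplicity functions and hence determined by its values on the multiplicity functions of single blocks, but the direct route above is shorter.
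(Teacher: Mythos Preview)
Your proposal is correct and follows essentially the same approach as the paper's proof: both directions proceed by defining $j(H) = J(\set{H})$, using additivity to write $J(F) = \sum_{H \in F} j(H)$, and then using symmetry on singleton families (via an extended bijection) to show $j$ depends only on $\num{H}$. The only minor difference is in the \emph{if} direction, where you verify symmetry directly from the cardinality-preserving bijection $H \mapsto \Set{g(i)}{i \in H}$, whereas the paper rewrites $\sum_{H \in F} h(\num{H}) = \sum_{i=1}^n \mu_F(i) h(i)$ and invokes \Cref{lemma:integerequivalence}; your route is slightly more self-contained, but the content is the same.
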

\extraproof{
\begin{proof}
  First, we address the \textit{if} direction.
  Suppose there exists $h: \set{1, \dots, n} \to \R$ satisfying \cref{eq:symmetricadditivecharacterization}.
  Clearly $J$ is additive.
  To show that $J$ is symmetric, we first claim
  \[
    \sum_{H \in F} h(\num{H}) = \sum_{i = 1}^{n} \mu_F(i) h(i) \quad \text{for all } F \in \Disj(P)
  \]
  To see this, group terms in the sum by the size of $H$.
  The right hand side depends on $F$ only through the multiplicity function $\mu_F$.
  Hence, for any $F, G \in \Disj(P)$ with $\mu_F = \mu_G$, we have $J(F) = J(G)$.
  Consequently, $J$ is symmetric by the \textit{if} direction of \Cref{lemma:integerequivalence}.

  For the \textit{only if} direction, suppose that $J$ is symmetric and additive.
  Since $J$ is additive, there is a function $j: \pow{P} \to \R$ satisfying \cref{eq:additivecharacterization}.
  In particular, $J(\set{H}) = j(H)$ for every $H \subset P$.
  We claim that for symmetric $J$, this function $j$ satisfies 
  \[
    j(H) = j(\tilde{H}) \quad \text{for all } H, \tilde{H} \subset P \text{ with } \num{H} = \num{\tilde{H}}
  \]
  To see this, let $H, \tilde{H} \subset P$ satisfy $\num{H} = \num{\tilde{H}}$ so that there is a bijection $\tilde{g}: H \to \tilde{H}$.
  Extend $\tilde{g}$ to a bijection $g: P \to P$.
  Then $\presup{g}{\set{H}} = \set{\tilde{H}}$ by construction.
  Consequently, we have
  \[
    j(H) = J(\set{H}) = J(\presup{g}{\set{H}}) = J(\set{\tilde{H}}) = j(\tilde{H})
  \]
  The second equality holds because $J$ is symmetric.
  Hence, there exists $h: \set{1, \dots, n} \to \R$ satisfying $j(H) = h(\num{H})$ for all $H \subset P$.
  Consequently, we can express
  \[
    J(F) = \sum_{H \in F} j(H) = \sum_{H \in F} h(\num{H}) \quad \text{for all } F \in \Disj(P)
  \]
  as desired.
\end{proof}
}

\extraforproof{ 
We use the \textit{if} direction of \Cref{lemma:symmetricadditivecharacterization} in the sequel to show that the objective of \Cref{prob:mintests} under symmetric distributions is both symmetric and additive.
}

\subsubsection{$J_{\MF}$ inherits the additivity of $J$}

If $J$ is both symmetric and additive, then $J_{\MF}$ inherits the additivity of $J$.
We formalize this statement below.

As usual, we call a function $M: \MultFns(1, \dots, n) \to \R$ \textit{additive} if $M(\mu + \nu) = M(\mu) + M(\nu)$ for all multiplicity functions $\mu$ and $\nu$ with $\mu + \nu \in \MultFns(1, \dots, n)$.
This condition is equivalent to the existence of a function $c: \set{1, \dots, n} \to \R$ satisfying
\begin{equation}\label{eq:multiplicityadditivecharacterization}
  \textstyle
  M(\mu) = \sum_{i = 1}^{n} c(i) \mu(i) \quad \text{for all } \mu \in \MultFns(1, \dots, n)
\end{equation}

\begin{lemma}\label{lemma:JMFinheritsadditivity}
  Suppose $J: \Disj(P) \to \R$ is symmetric with $J_{\MF}: \MultFns(1, \dots, n) \to \R$ defined as in \cref{eq:JMF}.
  If $J$ is additive, then $J_{\MF}$ is additive.
\end{lemma}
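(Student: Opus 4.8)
The plan is to derive additivity of $J_{\MF}$ from additivity of $J$ by ``lifting'' an arbitrary pair of multiplicity functions to a pair of disjoint elements of $\Disj(P)$ that realize them, and then invoking the additivity hypothesis on $J$. Throughout I use that $J_{\MF}$ is well defined — a consequence of the symmetry of $J$ via \Cref{lemma:integerequivalence} — and that it satisfies $J(F) = J_{\MF}(\mu_F)$ for \emph{every} $F \in \Disj(P)$, not merely for set partitions of $P$.

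First I would take multiplicity functions $\mu, \nu$ with $\mu + \nu \in \MultFns(1, \dots, n)$ and let $k = \sum_{i=1}^{n} i\,\mu(i)$ and $\ell = \sum_{i=1}^{n} i\,\nu(i)$ be the positive integers that $\mu$ and $\nu$ partition. The hypothesis $\mu + \nu \in \MultFns(1, \dots, n)$ says precisely that $k + \ell = \sum_{i=1}^{n} i\,(\mu+\nu)(i) \le n = \num{P}$. This bound leaves room in $P$ to choose disjoint subsets $A, B \subset P$ with $\num{A} = k$ and $\num{B} = \ell$. As already observed in the paper, one can then build a set partition $F$ of $A$ whose multiplicity function is $\mu$ and a set partition $G$ of $B$ whose multiplicity function is $\nu$ (group the elements of $A$ into blocks of the sizes prescribed by the integer partition corresponding to $\mu$, and likewise for $B$). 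Because $A$ and $B$ are disjoint, no block of $F$ meets a block of $G$; hence $F$ and $G$ are disjoint elements of $\Disj(P)$, $F \cup G \in \Disj(P)$, and the multiplicity-additivity fact from the preliminaries gives $\mu_{F \cup G} = \mu_F + \mu_G = \mu + \nu$.

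The conclusion then follows from a short chain of identities: applying $J(\cdot) = J_{\MF}(\mu_{(\cdot)})$, then additivity of $J$ on the disjoint pair $F, G$, then $J(\cdot) = J_{\MF}(\mu_{(\cdot)})$ again,
\begin{align*}
  J_{\MF}(\mu + \nu) &= J_{\MF}(\mu_{F \cup G}) = J(F \cup G) = J(F) + J(G) \\
  &= J_{\MF}(\mu_F) + J_{\MF}(\mu_G) = J_{\MF}(\mu) + J_{\MF}(\nu).
\end{align*}
Since $\mu$ and $\nu$ were arbitrary subject to $\mu + \nu \in \MultFns(1, \dots, n)$, this is exactly additivity of $J_{\MF}$. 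I expect the only step needing genuine care to be the lifting: one must verify that the bound $k + \ell \le n$ — which is what membership of $\mu + \nu$ in $\MultFns(1, \dots, n)$ buys us — really does let us place disjoint realizing partitions inside $P$, so that $F \cup G$ is a bona fide element of $\Disj(P)$ on which the additivity hypothesis for $J$ can be invoked. Everything else is routine bookkeeping with the established set-partition/multiplicity-function correspondence.
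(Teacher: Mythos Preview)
Your proof is correct and follows essentially the same approach as the paper: lift $\mu$ and $\nu$ to disjoint $F, G \in \Disj(P)$, use $\mu_{F\cup G} = \mu_F + \mu_G$, and chain the defining relation $J(\cdot) = J_{\MF}(\mu_{(\cdot)})$ with the additivity of $J$. You are slightly more explicit than the paper in justifying why such disjoint lifts exist (via the bound $k + \ell \le n$), but the argument is otherwise identical.
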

\extraproof{
\begin{proof}
  Let $\mu$ and $\nu$ be in $\MultFns(1, \dots, n)$ with $\mu + \nu \in \MultFns(1, \dots, n)$.
  Take $F$ and $G$ to be any \emph{disjoint} sets in $\Disj(P)$ so that $\mu$ and $\nu$ are the multiplicity functions of $F$ and $G$, respectively.
  Since $F \cap G = \varnothing$, the multiplicity function of $F \cup G$ is $\mu + \nu$.
  We claim
  \[
    J_{\MF}(\mu + \nu) = J(F \cup G) = J(F) + J(G) = J_{\MF}(\mu) + J_{\MF}(\nu)
  \]
  The first and third relations hold by definition of $J_{\MF}$ and the second holds because $J$ is additive.
  Since this holds for any $\mu$ and $\nu$, we conclude that $J_{\MF}$ is additive.
\end{proof}
}

The characterization in \cref{eq:multiplicityadditivecharacterization} says that any additive function on $\MultFns(1, \dots, n)$ has a representation $c: \set{1, \dots, n} \to \R$.
\Cref{lemma:symmetricadditivecharacterization} says a symmetric and additive function $J$ on $\Disj(P)$ has a representation $h: \set{1, \dots, n} \to \R$.
For the \emph{additive} function $J_{\MF}$, these coincide.
\extraforproof{ 
In fact, one can alternatively show the \textit{if} direction of \Cref{lemma:symmetricadditivecharacterization} above using the result of \Cref{lemma:JMFinheritsadditivity} and the characterization of \cref{eq:multiplicityadditivecharacterization}.
}

\subsubsection{The induced additive integer partition problem}
As before, suppose $J$ is additive and symmetric.
Since $J_{\MF}$ inherits the additivity of $J$, we can find a class of multiplicity equivalent set partitions optimal under $J$ by solving the following problem.
\begin{problem}\label{prob:additiveintegerpartition}
  Given $c: \set{1, \dots, n} \to \R$, find $\mu \in \MultFns(n)$ to minimize $\sum_{i = 1}^{n} c(i) \mu(i)$.
\end{problem}

Similar to \Cref{prob:integerpartition}, the one-to-one correspondence between $\MultFns(n)$ and $\IntParts(n)$ means that we can interpret \Cref{prob:additiveintegerpartition} as finding an \emph{integer} partition.
For this reason, prior work has called \Cref{prob:additiveintegerpartition} an integer partition problem \cite{engel2014optimal}.
We use the terminology \emph{additive} integer partition problem to distinguish \Cref{prob:additiveintegerpartition} from the general \Cref{prob:integerpartition}. 

\subsection{Minimizing tests for symmetric distributions}

Given a \emph{symmetric} distribution of specimen statuses, \Cref{prob:mintests} reduces to an additive integer partition problem.
In this case the objective, which is additive for any distribution, is \emph{also} symmetric.
We formalize this fact in \Cref{cor:mintestssymadd} below, which, given \Cref{lemma:symmetricadditivecharacterization}, is an immediate consequence of the following.

\begin{lemma}\label{lemma:Etestsgroup2}
  Suppose $x$ has distribution $p$.
  If $p$ is symmetric, then there exists a function $U: \set{1,\dots,n} \to \R$ satisfying $\E T_H(x) = U(\num{H})$ for all nonempty $H \subset P$.
\end{lemma}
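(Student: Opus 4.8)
The plan is to reduce the claim directly to \Cref{theorem:zeromarginals}. First I would recall the formula \cref{eq:expectedtests} for $\E T_H(x)$: it is $1$ when $\num{H} = 1$, and $1 + \num{H}\Prob(S_H(x) = 1)$ otherwise. The size $\num{H}$ already appears explicitly in both cases, so the only quantity that could depend on $H$ beyond its cardinality is $\Prob(S_H(x) = 1)$. Hence it suffices to show that this probability is a function of $\num{H}$ alone.

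Next I would rewrite this probability in terms of a marginal at $\0$. By definition $S_H(x) = \max_{i \in H} x_i$, so the event $\set{S_H(x) = 0}$ is exactly the event that $x$ restricted to $H$ is identically zero, i.e. $\Set{z \in \set{0,1}^P}{z_{|H} = \0}$. Therefore $\Prob(S_H(x) = 0) = p_H(\0)$, and consequently $\Prob(S_H(x) = 1) = 1 - p_H(\0)$.

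Now I would invoke \Cref{theorem:zeromarginals}: since $p$ is symmetric, there is a function $q: \set{0, \dots, n} \to [0,1]$ with $p_H(\0) = q(\num{H})$ for every $H \subset P$. Substituting, $\Prob(S_H(x) = 1) = 1 - q(\num{H})$, which depends on $H$ only through $\num{H}$. It then remains to package this into the desired $U$: define $U(1) = 1$ and $U(h) = 1 + h\bl(1 - q(h)\br)$ for $h = 2, \dots, n$. Comparing with \cref{eq:expectedtests} gives $\E T_H(x) = U(\num{H})$ for every nonempty $H \subset P$, as required.

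I do not expect any real obstacle here; the lemma is essentially a corollary of \Cref{theorem:zeromarginals}. The only point requiring a moment of care is the identification of the event $\set{S_H(x) = 0}$ with the fibre $\Set{z}{z_{|H} = \0}$ defining the marginal $p_H(\0)$, after which everything is bookkeeping. (One could alternatively phrase the whole argument through the representation $w$ of \Cref{rmk:nnz}, but routing it through $q$ is cleaner since $q(h)$ is precisely the probability a size-$h$ group tests negative.)
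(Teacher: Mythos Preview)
Your proposal is correct and follows essentially the same route as the paper: rewrite $\Prob(S_H(x)=1)$ as $1 - p_H(\0)$ via the identification $\set{S_H(x)=0} = \Set{z}{z_{|H}=\0}$, then invoke \Cref{theorem:zeromarginals} to replace $p_H(\0)$ by $q(\num{H})$. The paper presents these as ``three straightforward substitutions'' and leaves $U$ implicit, whereas you spell out the explicit formula $U(h) = 1 + h(1-q(h))$ for $h \ge 2$, but the argument is the same.
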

\begin{proof}
  Let $H \subset P$ be nonempty.
  We make three straightforward substitutions in \cref{eq:expectedtests}.
  First, the status of $H$ is either 0 or 1.
  Consequently, $\Prob(S_H(x) = 1) = 1 - \Prob(S_H(x) = 0)$.
  Second, the status of $H$ is 0 if and only if $x_i = 0$ for all $i \in H$.
  Hence, $\Prob(S_H(x) = 0) = p_{H}(\0)$.
  Finally, the symmetry of $p$ is equivalent to the existence of a function $q$ satisfying $p_H(\0) = q(\num{H})$ for all $H \subset P$.
  See \Cref{theorem:zeromarginals}.
  Substituting into \cref{eq:expectedtests} gives
  \[
    \E T_H(x) = \begin{cases}
      1 & \text{if } \num{H} = 1 \\
      1 + \num{H}(1 - q(\num{H})) & \text{otherwise}
    \end{cases}
  \]
  The right hand side is a function of $\num{H}$, as desired.
\end{proof}

\begin{corollary}\label{cor:mintestssymadd}
  Suppose $x$ has distribution $p$. Define $J: \Disj(P) \to \R$ by $J(F) = \E C(F, x)$ for all $F \in \Disj(P)$.
  Then $J$ is additive. If $p$ is symmetric, then $J$ is symmetric.
\end{corollary}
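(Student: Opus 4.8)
The plan is to read both claims straight off the definition $J(F) = \E C(F,x) = \sum_{H \in F} \E T_H(x)$ supplied by \cref{eq:totalexpectedtests}, using only that a sum over a set of groups splits along a partition of the index set, together with \Cref{lemma:Etestsgroup2} for the symmetry part.

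For additivity, I would take disjoint $F, G \in \Disj(P)$, so that $F \cup G \in \Disj(P)$ and $F$ and $G$ share no member. Then
\[
  J(F \cup G) = \sum_{H \in F \cup G} \E T_H(x) = \sum_{H \in F} \E T_H(x) + \sum_{H \in G} \E T_H(x) = J(F) + J(G),
\]
where the middle equality is just regrouping the sum according to whether $H \in F$ or $H \in G$. No property of $p$ is used, so $J$ is additive for every distribution.

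For symmetry, assume $p$ is symmetric and fix a bijection $g$ on $P$ and $F \in \Disj(P)$. The first point is that $H \mapsto \Set{g(i)}{i \in H}$ is a bijection from the members of $F$ onto the members of $\presup{g}{F}$ --- injective since $g$ is, surjective by the definition of $\presup{g}{F}$ --- and that it preserves cardinality. Reindexing the sum accordingly gives $J(\presup{g}{F}) = \sum_{H \in F} \E T_{\Set{g(i)}{i \in H}}(x)$. By \Cref{lemma:Etestsgroup2} there is a function $U$ with $\E T_{H'}(x) = U(\num{H'})$ for every nonempty $H'$; since $\num{\Set{g(i)}{i \in H}} = \num{H}$, each summand equals $U(\num{H}) = \E T_H(x)$, whence $J(\presup{g}{F}) = \sum_{H \in F} \E T_H(x) = J(F)$. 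As $F$ and $g$ were arbitrary, $J$ is symmetric.

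The argument is routine; the only place needing a moment's care is the reindexing in the symmetry step, where one must confirm that applying $g$ to the members of $F$ neither merges two of them nor loses any --- which is exactly injectivity and surjectivity of the map above, guaranteed by $g$ being a bijection and the members of $F$ being pairwise disjoint. One could alternatively obtain symmetry by combining \Cref{lemma:Etestsgroup2} with the \emph{if} direction of \Cref{lemma:symmetricadditivecharacterization}, taking $h = U$.
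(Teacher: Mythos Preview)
Your proof is correct. For additivity you and the paper both simply read the claim off \cref{eq:totalexpectedtests}. For symmetry the paper's route is exactly the alternative you mention in your final sentence: invoke the \emph{if} direction of \Cref{lemma:symmetricadditivecharacterization} with $h = U$ from \Cref{lemma:Etestsgroup2}. Your primary argument instead verifies $J(\presup{g}{F}) = J(F)$ directly by reindexing, which is equally valid and amounts to unfolding that lemma in place; neither approach buys anything over the other here.
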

\extraproof{
\begin{proof}
  $J$ is additive by definition.
  See \cref{eq:totalexpectedtests}.
  Now suppose $p$ is symmetric.

  By the \textit{if} direction of \Cref{lemma:symmetricadditivecharacterization}, it suffices to exhibit a function $h: \set{1, \dots, n} \to \R$ satisfying \cref{eq:symmetricadditivecharacterization}.
  We take $h$ to be the function $U$ of \Cref{lemma:Etestsgroup2}.
  We claim
  \[
    J(F) = \sum_{H \in F} U(\num{H}) \quad \text{for all } F \in \Disj(P)
  \]
  We conclude that $J$ is additive and symmetric, as desired.
\end{proof}
}

\subsection{Solving additive integer partition problems}
\label{subsection:solutions}
Several efficient approaches for computing the solutions of \emph{additive} integer partition problems are known \cite{engel2014optimal, onn2015some}.



\paragraph{Linear programming} A polyhedral approach identifies multiplicity functions with vectors in $\mathbb{R}^n$.
The convex hull of these \textit{multiplicity vectors}, called the \textit{integer partition polytope}, has a succinct polyhedral lift \cite{onn2015some}. 
Linear optimization over the extended formulation can be done via linear programming and a solution recovered via projection.
See \cite{onn2015some} for details.

\paragraph{Dynamic programming} Alternatively, a dynamic programming approach minimizes, in order from $k = 1, \dots, n$, the cost $c$ over the set $\MultFns(k)$.
A solution for the $k+1$ case is found using the solutions for the cases $1, \dots, k$.
See \Cref{subsubsection:dpapproach} below and \cite{engel2014optimal} for details.

\paragraph{Shortest path problem} Finally, a minimum-weight path reduction constructs a directed graph in which the multiplicity vectors correspond to the directed paths between two distinguished vertices.
By appropriately weighting the edges of these paths, the minimizing multiplicity vectors are put in one-to-one correspondence with the minimum-weight paths.
Consequently, one can find a minimizing multiplicity vector by solving the well-known shortest weighted path problem via standard algorithms.
See \cite{onn2015some} for details.

\subsubsection{A dynamic programming approach}
\label{subsubsection:dpapproach}

Here we expand on the dynamic programming algorithm mentioned above.
Further details and variants are given in \cite{engel2014optimal}.

The algorithm we describe sequentially computes optimal partitions of all integers $k \leq n$ in order from $k = 1, \dots, n$.
At step $k$, it uses the costs of optimal partitions of $1, \dots, k-1$ to find an optimal partition of $k$.
As usual, we call a partition of an integer $k$ \emph{optimal} if its multiplicity function minimizes the objective of \Cref{prob:additiveintegerpartition} over $\MultFns(k)$.

\paragraph{Interpretation}

Since we omit proofs below, we start by interpreting the algorithm.
We imagine partitioning the integer $n$ by first choosing to include a part of size $i \leq n$, and subsequently partitioning the remainder $n - i$.
It is easy to see that every partition of $n$ can be obtained in this way.
If the cost is additive, then the cost of such a partition is the cost of the part $i$ plus the cost of the partition chosen for $n - i$.
Given a fixed $i$, we can minimize this cost by optimally partitioning $n - i$.
So if we knew in advance the cost of optimally partitioning each integer smaller than $n$, then we could optimize over our choice of the first part $i$.
The same interpretation applies to partitioning $n-i$, and so on, recursively.

The algorithm proceeds in reverse of this interpretation. 
First we optimally partition $1$, then $2$, then $3$, and so on up to $n$.
To illustrate concretely, first we optimally partition 1.
This is trivial, since there is only one choice of partition.
Next, we optimally partition 2 by either taking a part of size $1$, inducing the partition $1+1$, or keeping the single part $2$.
Likewise for $3$. 
We may take a part of size $1$ and use our optimal partition of $2$, take a part of size $2$ and use our optimal partition of $1$, or take a single part of size $3$.
Which is best depends on the cost of partitioning $2$. 
We have already computed this value at the previous step.
Similarly for 4. 
We take a part of size $1$, $2$, $3$ or $4$. 
The choice depends on the cost of optimally partitioning $1$, $2$ and $3$, which we have already computed.
We continue in a similar way up to $n$.


\paragraph{Subproblem Optimal Value Recursion}
We briefly formalize this interpretation.
Given a function $M: \MultFns(1, \dots, n) \to \R$, define the function $\Mstar: \set{0, \dots, n} \to \R$
by $\Mstar(0) = 0$ and
\begin{equation}\label{eq:Mstar}
  \Mstar(k) = \min\Set{M(\mu)}{\mu \in \MultFns(k)} \quad \text{for } k = 1, \dots, n
\end{equation}
$\Mstar$ is called the \emph{value function}.
$\Mstar(k)$ is the cost of an optimal partition of $k$.
If $M$ is additive, then $\Mstar$ satisfies the following recursive relation.
See Theorem 1 of \cite{engel2014optimal}.

\begin{lemma}\label{lemma:recursion}
  Suppose $M: \MultFns(1, \dots, n) \to \R$ is additive with representation $c: \set{1, \dots, n} \to \R$ satisfying \cref{eq:multiplicityadditivecharacterization}.
  Then $\Mstar$ defined as in \cref{eq:Mstar} satisfies
  \[
    \Mstar(k) = \min\Set{\Mstar(k - i) + c(i)}{i \in \set{1, \dots, k}} \quad \text{for all } k \in \set{1, \dots, n}
  \]
\end{lemma}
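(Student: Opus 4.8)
The plan is to prove the identity by establishing the two inequalities $\Mstar(k) \le \min\Set{\Mstar(k-i)+c(i)}{i \in \set{1,\dots,k}}$ and $\Mstar(k) \ge \min\Set{\Mstar(k-i)+c(i)}{i \in \set{1,\dots,k}}$ separately, for each fixed $k \in \set{1,\dots,n}$. The only real content is the observation that, because $M$ has the additive representation $M(\mu) = \sum_{i=1}^n c(i)\mu(i)$ from \cref{eq:multiplicityadditivecharacterization}, adding or deleting a single part of size $i$ from a multiplicity function changes the objective value by exactly $c(i)$; everything else is bookkeeping with the definition of $\MultFns(k)$.

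For $\Mstar(k) \le \Mstar(k-i)+c(i)$, fix $i \in \set{1,\dots,k}$. If $i < k$, let $\nu \in \MultFns(k-i)$ attain $\Mstar(k-i)$ and let $\mu$ agree with $\nu$ except that $\mu(i) = \nu(i)+1$; then $\mu \in \MultFns(k)$ and by the representation $M(\mu) = M(\nu) + c(i) = \Mstar(k-i)+c(i)$, so $\Mstar(k) \le M(\mu)$. If $i = k$, take $\mu$ to be the multiplicity function with $\mu(k)=1$ and $\mu(j)=0$ otherwise; then $M(\mu) = c(k) = \Mstar(0)+c(k)$ using the convention $\Mstar(0)=0$. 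Minimizing over $i$ gives the first inequality.

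For the reverse inequality, let $\mu^\star \in \MultFns(k)$ attain $\Mstar(k)$. Since $k \ge 1$, the corresponding integer partition has at least one part, so some $i \in \set{1,\dots,k}$ has $\mu^\star(i) \ge 1$. Let $\nu$ agree with $\mu^\star$ except that $\nu(i) = \mu^\star(i)-1$. If $\nu \ne \0$, then $\nu \in \MultFns(k-i)$ with $k-i \ge 1$, so $M(\nu) \ge \Mstar(k-i)$ and $\Mstar(k) = M(\mu^\star) = M(\nu)+c(i) \ge \Mstar(k-i)+c(i)$. If $\nu = \0$, then necessarily $i=k$ and $\Mstar(k) = M(\mu^\star) = c(k) = \Mstar(0)+c(k)$. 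In either case $\Mstar(k) \ge \min\Set{\Mstar(k-j)+c(j)}{j \in \set{1,\dots,k}}$, which combined with the first inequality yields the claim.

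The one step requiring care — and the only place anything could go wrong — is the boundary case $k-i=0$, where the "reduced" multiplicity function $\nu$ is $\0$ and hence lies outside the domain $\MultFns(1,\dots,n)$ of $M$. This is why the recursion is phrased in terms of $\Mstar$ rather than $M$: the convention $\Mstar(0)=0$ (together with the fact that the empty sum in the representation of $M$ vanishes) makes "removing the last remaining part of size $k$" consistent with the formula $\Mstar(k) = \Mstar(0)+c(k)$. I do not anticipate any genuine obstacle beyond handling this edge case cleanly.
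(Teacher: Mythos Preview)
Your argument is correct: the two inequalities are exactly the right decomposition, and your use of the explicit representation $M(\mu)=\sum_i c(i)\mu(i)$ (rather than the abstract additivity condition) neatly handles the edge case where removing a part leaves the zero function, which falls outside $\MultFns(1,\dots,n)$.

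As for comparison with the paper: the paper does not actually prove this lemma but instead cites Theorem~1 of \cite{engel2014optimal}. Your self-contained proof therefore supplies what the paper omits; it is the standard Bellman-recursion argument and matches what one finds in that reference.
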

Hence we can use $\Mstar(1), \dots, \Mstar(k-1)$ to compute $\Mstar(k)$.

\paragraph{Algorithm}
\Cref{lemma:recursion} justifies a simple algorithm for computing $\Mstar(1), \dots , \Mstar(n)$ and corresponding multiplicity functions $\muiter{1}, \dots , \muiter{n}$ satisfying $M(\muiter{k}) = \Mstar(k)$ for $k = 1, \dots, n$.
In other words, $\muiter{k}$ is the multiplicity function of an optimal partition of $k$.
We let $\muiter{0}$ be the constant zero function for notational convenience.

We iterate from $k = 1, \dots, n$.
At step $k$, we find an integer $\iiter{k}$ so that
\[
  \iiter{k} \in \argmin \Set{\Mstar(k-i) + c(i)}{i \in \set{1, \dots, k}}
\]
Using $\iiter{k}$ and $\muiter{k - \iiter{k}}$, we define the multiplicity function $\muiter{k} \in \MultFns(k)$ by
\[
  \muiter{k}(j) = \begin{cases}
  \muiter{(k - \iiter{k})}(j) + 1 & \text{if } j = \iiter{k} \\
  \muiter{(k - \iiter{k})}(j)     & \text{otherwise }
\end{cases}
\]
We can interpret $\muiter{k}$ as an extension of $\muiter{k-\iiter{k}}$ which includes one additional part of size $\iiter{k}$.
We choose the part $\iiter{k}$ to minimize the sum of its cost and the cost of optimally partitioning $k - \iiter{k}$.
By construction, $\muiter{k}$ has cost $\Mstar(k) = \Mstar(k-\iiter{k}) + c(\iiter{k})$ and so is optimal.
See \Cref{lemma:recursion}.
In particular, the multiplicity function $\muiter{n}$ corresponds to an optimal partition of $n$.

This algorithm has quadratic time complexity.
%
We also mention that \Cref{prob:additiveintegerpartition} and \Cref{lemma:recursion} have analogs in which we restrict the support of the multiplicity function.

\section{Numerical example on empirical data}
\label{section:experiment}
In this section we apply the tools of symmetric probability and group testing to an empirical dataset from the COVID-19 pandemic.
The example is meant to illustrate several approaches.
It is not intended to improve upon testing methodology used in a practical setting at this point.

We start by describing the origin and preparation of the dataset.
Then we compare four pooling strategies.
The first three strategies happen to coincide for this dataset whereas the final one, using tools developed in this paper, gives a different and more efficient pooling.

\subsection{Dataset background}

The Hebrew University-Hadassah COVID-19 Diagnosis Team provide the pooled testing data we use below \cite{barak2021lessons}.

\paragraph{Testing context}
The COVID-19 pandemic called for large-scale and high-throughput disease screening.
 Authorities encouraged specimen pooling to conserve test resources \cite{fda2020coronavirus}.

The Hebrew University team processed 133,816 nasopharyngeal lysates across 17,945 pools via Dorfman screening between April 19 and September 16, 2020 
\cite{barak2021lessons}.
They collected these specimens from \textit{asymptomatic} individuals and performed tests for screening purposes.
Their protocol adaptively switched between size-5 and size-8 pools.

\paragraph{Testing pipeline}
Specimens arrived in \textit{batches}, often of size 80.
Technicians centrifuged each lysate before a robot performed pooling and mixing.
Up to 92 pooled or individual specimens could be tested simultaneously in a single \textit{run} of a reverse transcription polymerase chain reaction (PCR) machine.
The pool size and specimen-to-pool assignment were informed by the prior week's prevalence and batch-specific side information.

\paragraph{Correlated specimen statuses}
The team observed empirical efficiency \emph{exceeding} that indicated by Dorfman's analysis.
For example, at a prevalence of 1.695\% the size-8 pools enjoyed an \emph{empirical} efficiency of 4.59 whereas Dorfman predicts a \emph{theoretical} efficiency of 3.96.

They attribute this discrepancy to the ``nonrandom distribution of positive specimens in pools."
They report that ``specimens arrive in batches: from colleges, nursing homes, or health care personnel."
Technicians sorted related specimens into pools ``such that family members and roommates were often pooled together, thereby increasing the number of positive samples within the pool."
This protocol \emph{helps} efficiency because keeping positive specimens together mitigates the number of positive pools and, hence, retests required.

These circumstances challenge the typical probabilistic assumption that specimen statuses are independent.
For this particular dataset, therefore, modeling statuses as exchangeable may be more appropriate than modeling them as independent.

\newcommand{\adjustedNumberOfSpecimens}{112,848}
\newcommand{\adjustedNumberOfPools}{14,106}
\newcommand{\batchedNumberOfBatches}{1,410}
\newcommand{\batchedNumberOfSpecimens}{112,800}
\newcommand{\batchedNumberOfPools}{14,100}

\subsection{Dataset preparation}
\label{subsection:dataset_preparation}
We simplify the dataset before using the PCR machine \textit{run timestamp} to impute size-80 batches of specimens.

\paragraph{Simplifications}

We ignore (a) pools without a timestamp (b) pools of size 5 (c) pools with specimens of inconclusive status.
The first measure allows us to batch sequentially; the second mitigates the varying prevalence rate across pool sizes; the third ensures we have complete status data.
These adjustments leave \adjustedNumberOfSpecimens{} specimens across \adjustedNumberOfPools{} pools.

\paragraph{Batching}
Although the dataset does not include information about which samples arrived together, it does include information about \textit{when} pools were tested in the PCR machine.
We use this \textit{run timestamp} to order the samples and impute batches of size 80 sequentially.

This protocol yields \batchedNumberOfBatches{}  batches including \batchedNumberOfSpecimens{} specimens across \batchedNumberOfPools{} pools.
One could alternatively batch within a particular day, PCR run, or by using different batch sizes (e.g., 40 or 64).
Our experiments indicate that these results closely correspond to the size-80 sequential batching case and so we do not include details here.

\newcommand{\numberOfShufflingRepetitions}{10,000}

\subsection{Experimental setup}
We compare four strategies to pool the finite population of size 80.
The last is enabled by the tools of this paper.
The strategies are:
\begin{enumerate}
  \item
    \textit{Hebrew University team}.
    Use 10 size-8 pools \cite{barak2021lessons}.
  \item
    \textit{Dorfman}.
    Use the pool size indicated by Dorfman's infinite population analysis \cite{dorfman1943detection}. 
    Include one extra smaller pool if this size does not evenly divide 80.
  \item
    \textit{Independent statuses}.
    Select a pooling to minimize the expected tests used under an estimated \emph{IID} distribution.
    Use the algorithm of Hwang \cite{hwang1975generalized} or of \Cref{subsubsection:dpapproach}.
  \item
    \textit{Exchangeable statuses}.
    Select a pooling to minimize the expected tests used under an estimated \emph{symmetric} distribution.
    Use the algorithm of \Cref{subsubsection:dpapproach}.
\end{enumerate}

Strategy (1) requires no estimation, strategies (2) and (3) require estimating the population prevalence, and strategy (4) requires estimating the parameters of a symmetric distribution.
For (4) we use the principle of maximum likelihood (see \Cref{subsubsection:fitting}).

Each strategy may indicate a distinct pooling.
We evaluate these under the estimated symmetric distribution and against the empirical data.
We report the empirical efficiencies both \textit{with} and \textit{without} randomization over specimen-to-pool assignment.
For the former we randomize over \numberOfShufflingRepetitions{} trials.
We also report the theoretical efficiency of size-8 pools as indicated by Dorfman's infinite population analysis and under the estimated finite IID model.

We emphasize that these strategies do not model the dynamics of the underlying infection.
Also, we leave to future work a comparison with other models that capture correlation differently, such as those considering community structure.
See \Cref{subsection:correlations}.

\newcommand{\expEmpiricalPrevalence}{1.624}
\newcommand{\expTheoreticalDorfmanEfficiency}{4.04}
\newcommand{\expTheoreticalIIDEfficiency}{4.04}
\newcommand{\expTheoreticalSizeEightEfficiency}{4.38}
\newcommand{\expTheoreticalSizeTenEfficiency}{4.48}
\newcommand{\expEmpiricalSizeEightEfficiencyNoShuffle}{4.71}
\newcommand{\expEmpiricalSizeTenEfficiencyNoShuffle}{4.75}
\newcommand{\expEmpiricalSizeEightEfficiencyShuffled}{4.38}
\newcommand{\expEmpiricalSizeTenEfficiencyShuffled}{4.48}
\newcommand{\expEmpiricalSizeEightEfficiencyShuffledStdError}{0.02}
\newcommand{\expEmpiricalSizeTenEfficiencyShuffledStdError}{0.02}

\subsection{Experimental results}
The empirical prevalence is \expEmpiricalPrevalence\%.
We visualize the estimated IID and symmetric distributions used for strategies (3) and (4) in \Cref{figure:models_analysis}.

\begin{figure}[htbp]\label{figure:models_analysis}
  \centering
  \includegraphics[width=\textwidth]{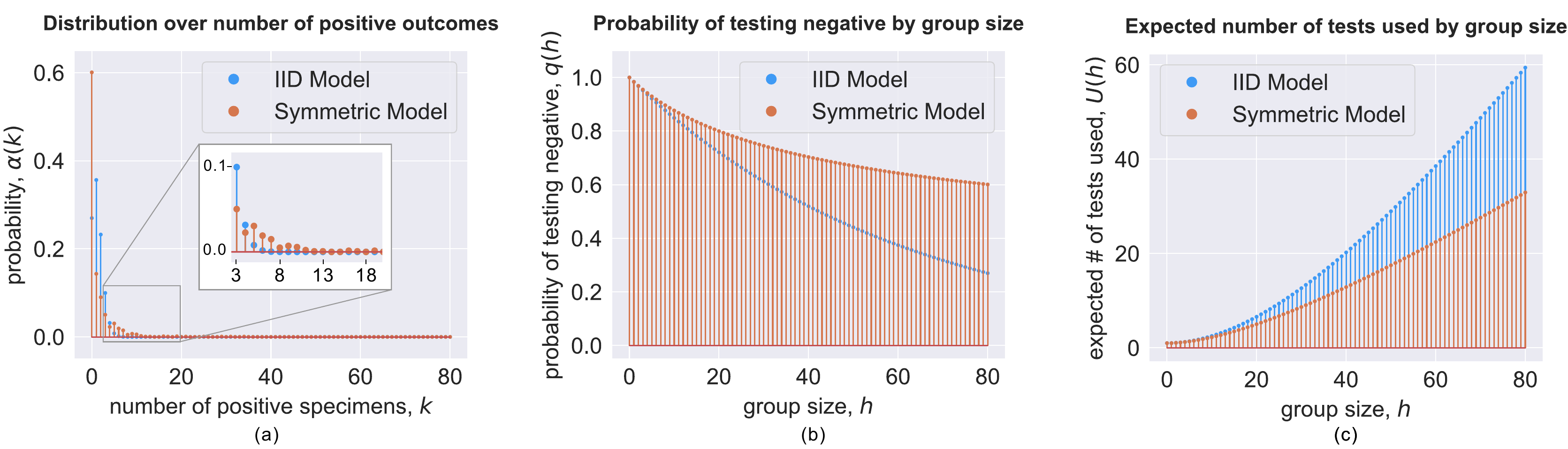}
  \vspace{-2.0em}
  \caption{
  Comparison of an independent and identically distributed (IID) model with a symmetric model for a population of size 80. 
  (a) The representation $\alpha$ of these distributions where $\alpha(k)$ is the probability of seeing $k$ positive specimens (see \Cref{proposition:finitefinetti}). 
  The IID model decays more rapidly.
  The symmetric distribution has non-monotonic decay; e.g., it assigns more mass to five positives than four positives.
  (b) The representation $q$ where $q(h)$ is the probability that a group of size $h$ tests negative (see \Cref{theorem:zeromarginals}).
  The IID model underestimates these probabilities.
  (c) The function $U$ where $U(h)$ is the expected number of tests used on a group of size $h$ (see \Cref{lemma:Etestsgroup2}).
  The IID model overestimates these costs.
  }
\end{figure}

\paragraph{Pooling strategies} 
Strategies (1), (2) and (3) each indicate 10 pools of size 8 whereas strategy (4) indicates 8 pools of size 10.
Since the strategies only indicate two distinct options, we refer to \textit{size-8 pools} and \textit{size-10 pools} in the discussion below.

\paragraph{Theoretical efficiencies} 
Both Dorfman's infinite analysis and the finite IID model indicate an efficiency of \expTheoreticalDorfmanEfficiency{} for size-8 pools.
Under the estimated symmetric model, the efficiency of the size-8 and size-10 pools is \expTheoreticalSizeEightEfficiency{} and \expTheoreticalSizeTenEfficiency{}, respectively.

\paragraph{Empirical efficiencies}
The average empirical efficiencies of the size-8 and size-10 pools are \expEmpiricalSizeEightEfficiencyShuffled{} and \expEmpiricalSizeTenEfficiencyShuffled{}, respectively.
The standard error in both cases is \expEmpiricalSizeEightEfficiencyShuffledStdError{}.
Without randomization, the empirical efficiencies for size-8 and size-10 pools are \expEmpiricalSizeEightEfficiencyNoShuffle{} and \expEmpiricalSizeTenEfficiencyNoShuffle{}, respectively.

\subsection{Discussion and interpretation}

The estimated IID and symmetric distributions differ visibly (see \Cref{figure:models_analysis}).
The IID distribution \emph{underestimates} the probabilities of observing $\geq 6$ positive specimens (see \Cref{figure:models_analysis}, panel a).
Hence, it also the \emph{underestimates} probability that a group of a particular size will test negative (see \Cref{figure:models_analysis}, panel b).
As a result, it \emph{overestimates} the number of tests used for a group of a particular size (see \Cref{figure:models_analysis}, panel c).
For simplicity, we do not discuss the important topic of uncertainty in these estimates.

\paragraph{Strategies}
The first three strategies \emph{coincide} whereas strategy (4) uses \emph{fewer} and \emph{larger} pools.
In our experience, it is usual for the symmetric model to indicate larger pools.
We attribute this phenomenon to the overestimation described in the previous paragraph.

We emphasize that strategies (1), (2) and (3) need \emph{not} coincide.
When they do differ, in our experience, it is often by indicating successively larger pools.
For example, strategy (3) avoids the small remainder pool indicated by (2) when Dorfman's pool size does not evenly divide the population size.
Here the strategies agree.
They also agree with the Hebrew University team's original choice of size-8 pools.

\paragraph{Theoretical efficiencies}
The efficiencies indicated by Dorfman and the finite IID model (a) agree, (b) exceed that reported in \cite{barak2021lessons} and (c) \emph{underestimate} the theoretical efficiency as predicted by the symmetric distribution.
Phenomenon (a) may be interpreted to justify Dorfman's approximation.
Phenomenon (b) occurs because the prevalence is \emph{slightly} lower in our processed data than the original dataset.
Phenomenon (c) is a consequence of the IID model overestimating the expected number of tests used.

Under the estimated symmetric distribution, the efficiency of the size-10 pools exceeds that of the size-8 pools.
We expect the size-10 pools to be at least as efficient as the size-8 pools as a consequence of the optimization carried out in strategy (4).

\paragraph{Empirical efficiencies}
With randomization, the mean empirical efficiencies agree with their theoretical values.
The standard errors are relatively small.

Without randomization, both size-8 and size-10 efficiencies increase with the size-10 efficiency remaining larger.
This increase appears to be a consequence of the intentional pooling carried out by the Hebrew University Team.
Since we batch and pool sequentially, the size-8 pools used here match exactly those constructed by the team.
Although the size-10 efficiency is higher here, our experience indicates that this difference is not significant.
The empirical efficiency reported in \cite{barak2021lessons} is \emph{lower} than these values because certain pools were retested even though each of the pool's specimens was negative.

\section{Prior work on group testing}
\label{section:prior_work}
Here, we elaborate on the areas and applications of group testing.
We highlight, in particular, the emerging area of group testing under correlations.

\subsection{Areas of group testing}
We start by outlining various divisions of group testing.

\subsubsection{Specimen status models, side information, and objective}
A first distinction in group testing is between the probabilistic and combinatorial approach.
In this paper, we consider the \emph{probabilistic} approach.
Here, one specifies a probabilistic model of specimen statuses and performs testing to minimize some criterion usually related to expected efficiency.
In the alternative \emph{combinatorial} approach, one specifies information about the number of positive specimens and performs testing to minimize a worst-case criterion usually related to the maximum number of tests required.
Hence a worst-case, or \emph{minimax} \cite{du2000combinatorial}, analysis replaces an average-case analysis.
For examples of combinatorial group testing, starting with Li in 1962
\cite{
li1962sequential}, see
\cite{
hwang1972method,
katona1973combinatorial,
chang1980group,
chang1981group,
gilstein1985optimal,
eppstein2007improved}.
Hereafter we assume the probabilistic setting.

Within probabilistic group testing, a \emph{first further} subdivision involves the related aspects of model choice and side information.
For example, Dorfman \cite{dorfman1943detection} models binary specimen statuses as IID and assumes only knowledge of the \emph{prevalence}, i.e. the probability that any given specimen is positive.
Historically, several authors have followed his approach
\cite{
sterrett1957detection,
sobel1959group,
ungar1960cutoff,
finucan1964blood,
watson1961study,
gurnow1965note,
sobel1971symmetric,
graff1972group,
samuels1978exact,
turner1988calculus}, including two influential textbooks containing the so-called \emph{blood testing problem} as exercises \cite{feller1968introduction,wilks1962mathematical}.
This simple probabilistic model has been called the \emph{IID model} \cite{aldridge2019group}, \emph{binomial model} \cite{pfeifer1978dorfman-type,chen1990using}, \emph{B-model} \cite{sobel1968binomial}, or \emph{binomial} \cite{kumar1971finding} or \emph{homogeneous} population \cite{aprahamian2019optimal}.
Some authors use the term \emph{binomial group testing} \cite{sobel1968binomial,hwang1974finding}.

Many other probabilistic models have been considered besides the binomial.
In 1968, Sobel considered a setting in which $d$ positive specimens are distributed uniformly throughout the population \cite{sobel1968binomial}.
This paradigm is called the \emph{hypergeometric model}
\cite{
sobel1966binomial1,
sobel1968binomial,
hwang1981hypergeometric}, \emph{H-model} \cite{sobel1968binomial}, or \emph{combinatorial prior}
\cite{
aldridge2019group}.
The term \emph{generalized hypergeometric model} \cite{hwang1978note} has been used when only an upper bound on $d$ is assumed, whereas the term \emph{truncated binomial model} \cite{hwang1980optimal,hwang1984robust} has been used  if an upper bound is known for the binomial model.
In 1973, Nebenzahl and Sobel \cite{nebenzahl1973finite} considered group testing for a population composed of several separate binomial subpopulations each with a different prevalence.
As we mention above, Hwang \cite{hwang1975generalized} further generalized this direction by modeling specimens as independent but \emph{nonidentical} binary random variables.
This paradigm is called the \emph{generalized binomial model} \cite{hwang1975generalized}, \emph{prior defectivity model} \cite{aldridge2019group}, \emph{nonidentical model} \cite{li2014group,kealy2014capacity,doger2021group}, or \emph{heterogeneous population} \cite{elhajj2022screening,aprahamian2019optimal,black2012group}.

In this style, we might use the term \emph{exchangeable model} or \emph{symmetric model} to describe the exchangeable populations we consider herein.
The information assumed is the $n$ parameters of the symmetric distribution.
On one hand, the binomial, truncated binomial, hypergeometric, and generalized hypergeometric models are symmetric.
On the other, the generalized binomial model is \emph{not} symmetric.
Previously, the so-called \emph{mean model} \cite{hwang1984robust} has been studied as a generalization of all of these.
It assumes only the mean number of positives.
Later on, we discuss other more recent models allowing correlation between specimen statuses.

Within probabilistic group testing, a \emph{second further} subdivision relates to parameter uncertainty and the choice of objective.
Starting with Sobel and Groll in 1959 \cite{sobel1959group}, several authors handle uncertainty in model parameters
\cite{
sobel1966binomial2,
kumar1973asymptotically,
chen1990using,
johnson1999dual}.
In this setting, the objective of \emph{estimation} may replace that of efficiency
\cite{
sobel1975group,
chen1990using}.
Elsewhere, other objectives such as information gain \cite{abraham2020bloom} and risk-based metrics \cite{aprahamian2019optimal} have been considered.
See \cite{hitt2019objective} for further discussion of different objectives.
In this paper, we assume full knowledge of model parameters and focus on the objective of efficiency as measured by the expected number of tests used.

\subsubsection{Testing models, feasibility, and noise}
A second distinction relates to the group testing capability.
Dorfman \cite{dorfman1943detection} considers unconstrained, noiseless, binary individual and group tests.
This is the setting we consider.
The terms \emph{reliable} \cite{du2000combinatorial} for noiseless and \emph{disjunctive} \cite{berger2002asymptotic} for binary group outcomes are also used.
We mention alternative testing models for binary specimen statuses below.
Historically, other testing models also arise naturally from nonbinary specimen status models, as for example the \emph{trinomial model} \cite{kumar1970group-testing} and \emph{multinomial model} \cite{kumar1970multinomial}.

Toward \emph{more} informative tests, we mention three examples. 
First, Sobel \cite{sobel1968binomial} considered \emph{quantitative group testing} \cite{aldridge2019group} in which group tests reveal the number of positive specimens.
Sobel used the term \emph{H-type} model, in contrast with the term \emph{B-type} model for the usual binary result setting.
As indicated earlier, he used analogous language for the specimen model.
Other terms include \emph{linear model} or \emph{adder channel model} \cite{aldridge2019group}.
For variants on this theme, see \cite{mehravari1986generalized,amin2014semiquantitative,wang2023tropical}.
Second, Pfeifer and Enis \cite{pfeifer1978dorfman-type} considered group tests that reveal the sum or mean of individual test results.
Although the distinction involves tests and not statuses, they use the term \emph{modified binomial model} or \emph{M-model}.
For examples of fully continuous test results, see \cite{wein1996pooled,wang2018group}. 
Third, Sobel and coauthors \cite{sobel1971symmetric} considered \emph{symmetric group testing} \cite{du2000combinatorial} in which there are three group test outcomes: all positive, all negative, and mixed.
We reiterate that symmetric here describes the test model and not the status model.

In the opposite direction, several authors weaken the group test capability.
Toward \emph{less} informative models, we mention Hwang \cite{hwang1976group} and Farach et al. \cite{farach1997group} who consider dilution effects and so-called \emph{inhibitor} specimens, respectively.
For details and other examples, see the survey \cite{aldridge2019group} and the book \cite{du2000combinatorial}.
Similarly, application areas often motivate various forms of \emph{constrained group testing} \cite{aldridge2019group}.
Two natural and classic examples limit the size of a group test \cite{hwang1975generalized} or the divisibility of a specimen \cite{sobel1959group}.
Recently, these have been studied under the heading \emph{sparse group testing} \cite{gandikota2016nearly,gandikota2019nearly,
inan2017sparse,
inan2020sparse}.
For a second example, in \emph{graph-constrained group testing} the tests must correspond to paths in a given graph \cite{harvey2007non-adaptive,karbasi2012sequential,cheraghchi2012graph-constrained,sihag2021adaptive,sihag2021adaptive,spang2019unconstraining}.
The methodology we give herein readily handles limits on group size.
We consider no additional constraints.

Starting with Graff and Roeloffs in 1972 \cite{graff1972group}, authors regularly study probabilistic models of noisy, or \emph{unreliable}, tests \cite{kim2007comparison,johnson1999dual,ahn2023adaptive}.
Noisy tests motivate studying \emph{non-exact}, or \emph{partial}, recovery as opposed to \emph{exact} recovery \cite{aldridge2019group}.
In this paper, we only consider reliable tests.

\subsubsection{Algorithms and analysis}
A third distinction in group testing involves the algorithms and analysis considered.
The algorithmic distinction is largely captured by a division into \emph{adaptive} and \emph{nonadaptive} procedures.
The analytical distinction is largely captured by a division into \emph{finite population} and \emph{infinite population}, or \emph{asymptotic}, analysis.

The terms \emph{adaptive}, \emph{sequential} and \emph{multistage} describe procedures with multiple \emph{rounds}, \emph{cycles}, or \emph{stages} of testing \cite{du2000combinatorial}.
The tests of later rounds may depend on, and so adapt to, the results of earlier ones.
Each round may involve one test or several.
The literature is replete with adaptive algorithms
\cite{
lee1972dorfman,
gill1974identification,
hwang1975generalized,
kealy2014capacity,
broder2020note,
doger2021group}.
The further modifiers \emph{nested} \cite{hwang1976optimum,malinovsky2019revisiting} and \emph{hierarchical} \cite{kim2007comparison,tebbs2013two-stage,johnson1991inspection} indicate that groups tested in later stages are subsets of groups already tested.
For example, a classic multistage nested approach is Sobel and Groll's original \emph{binary splitting} or \emph{halving} \cite{sobel1959group}.
For a modern discussion and further examples, see \cite{aldridge2019group, du2000combinatorial}.

Alternatively, various applications motivate \emph{nonadaptive}, or \emph{single-stage}, algorithms in which all group tests must be specified in advance
\cite{
hwang1987non-adaptive,
balding1996optimal,
chan2011non-adaptive,
cheraghchi2011group,
dyachkov2014lectures,
cheraghchi2012graph-constrained,
aldridge2019group}.
Although it is sometimes natural in this case to discuss the two stages of testing and \emph{decoding} \cite{aldridge2019group}, we use the term \emph{two-stage} exclusively in its usual sense \cite{berger2002asymptotic,debonis2005optimal,mezard2011group} of two rounds of testing.

Dorfman's \cite{dorfman1943detection} particular two-stage, adaptive strategy splits the population into non-overlapping groups of a fixed size, tests these, and individually retests the specimens of positive groups.
The strategy has been called the \emph{Dorfman procedure}
\cite{
graff1972group,
hwang1975generalized,
pfeifer1978dorfman-type,
hwang1984robust},
\emph{Dorfman-type group testing} \cite{pfeifer1978dorfman-type},
\emph{Dorfman screening}
\cite{
mcmahan2012informative,rewley2020specimen},
\emph{Dorfman testing }
\cite{aprahamian2019optimal},
and \emph{single pooling} \cite{broder2020note}.
Some authors use the terms \emph{conservative} \cite{aldridge2022conservative} or \emph{trivial} \cite{debonis2005optimal} when the second round of a two-stage procedure only involves individual retests.
These terms are usually employed, however, when \emph{confirmatory} \cite{farach1997group} tests are used to verify suspected positives indicated by a first round of \emph{overlapping} tests.
This occurs, for example, in \emph{array testing} \cite{phatarfod1994use,mcmahan2011two-dimensional}.

Dorfman \cite{dorfman1943detection} considers the setting in which the population size tends to infinity.
This asymptotic regime remains popular, especially in the information theory community \cite{aldridge2019group}.
On the other hand, starting with Sobel and Groll in 1959 \cite{sobel1959group}, many authors consider finite populations
\cite{
kumar1970multinomial,
gill1974identification,
hwang1975generalized} or both settings
\cite{
nebenzahl1973finite,
sobel1971symmetric,
pfeifer1978dorfman-type}.
We consider herein the finite-population setting in which Dorfman's procedure is generalized slightly to allow for groups of different sizes.
Hence one seeks a partition of the population. 
See \Cref{section:problem} for details.

\subsection{Applications of group testing}
Next, we discuss applications.

\subsubsection{Beyond disease screening}
In his original article, Dorfman \cite{dorfman1943detection} speculated on the utility of group testing outside of medical testing.
In particular, he mentioned manufacturing quality control.
Sobel and Groll's influential 1959 article \cite{sobel1959group} gave further examples. 
See also the book \cite{johnson1991inspection}.
Since then, researchers have applied group testing techniques in such diverse settings as wireless communications 
\cite{
hayes1978adaptive,
berger1984random,
wolf1985born,
luo2008neighbor,
inan2017sparse,
inan2018energy-limited,
inan2019group,
inan2020sparse,
cohen2020efficient}, genetics 
\cite{
green1990systematic,
bruno1995efficient,
macula1998probabilistic,
hwang2006pooling},
machine learning \cite{ubaru2017multilabel,zhou2014parallel,malioutov2013exact,liang2021neural},
signal processing \cite{gilbert2008group,cohen2021serial} and data stream analysis \cite{cormode2005whats,amid2014poisson}.
See the survey \cite{aldridge2019group} and book \cite{du2000combinatorial} for further applications and references.

\subsubsection{The COVID-19 pandemic}
The COVID-19 pandemic created a surge of interest in group testing for disease screening \cite{
mallapaty2020mathematical,
ellenberg2020five,
austin2020pooling}.
We make four observations.
First, pooling was feasible.
Standard technology detects SARS-CoV-2 virus in pools of up to 32 specimens \cite{yelin2020evaluation}.
For more on test sensitivity and optimal pool sizes in real-world scenarios, see
\cite{
wang2021performance,
bateman2021assessing,
hanel2020boosting,
burtniak2023dorfman}.
Second, pooling was widely and successfully used in practical settings \cite{hogan2020sample,yelin2020evaluation,lohse2020pooling,ben-ami2020large-scale,barak2021lessons} and encouraged by authorities \cite{cdc2021interim,fda2020coronavirus,augenblick2020group,sunjaya2020pooled,abdalhamid2020assessment,verdun2021group,daniel2021pooled}.
Third, practitioners often preferred Dorfman's procedure for reasons, among simplicity, that we detail below \cite{ben-ami2020large-scale,barak2021lessons}.
Other sophisticated approaches were, however, proposed \cite{mutesa2021pooled,ghosh2021compressed,hong2022group}.
Finally, the classical independence theory failed to explain empirical findings in large-scale asymptomatic screening \cite{barak2021lessons,comess2022statistical}.
We discuss this phenomenon and work aiming to remediate it below.

\subsubsection{Benefits of Dorfman's procedure}
There are reasons to prefer Dorfman's protocol beyond its simplicity, historical precedence and modern importance.
First, it divides each specimen into only two aliquots.
This feature is relevant when the testing process is destructive or dilutive, as is usually the case in disease screening or any biological specimen testing.
Second, it is parallel.
Within both stages, all indicated tests can be performed at the same time.
Consequently, the latency is predictable and bounded.
The test efficiency gains of more sophisticated procedures, e.g. Sterrett's \cite{sterrett1957detection} or binary splitting \cite{sobel1959group}, are often offset by latency considerations.
Third, it has easy to compute pool sizes and interpretable results.
The methodology we develop for exchangeable populations also enjoys these features.

\subsection{Group testing with specimen status correlation}
\label{subsection:correlations}
Finally, the COVID-19 pandemic created a surge of interest in studying group testing under models motivated by infectious disease screening.
These often include \emph{correlation} between statuses, a feature largely absent from the classical literature.
Furthermore, these models involve various forms and degrees of side information.
While it is reasonable to suppose that such side information can help efficiency, it is natural to be interested in methodology independent of it.
Modeling exchangeability requires no additional knowledge of contact tracing, interaction networks, or community structure.

Although Hwang mentions correlated statuses in his 1984 discussion of the mean model \cite{hwang1984robust}, Lendle et al.'s 2012 article \cite{lendle2012group} appears to be the first to study correlated specimen statuses in earnest.
They investigate a restricted form of exchangeability and show that efficiency gains can result from pooling \emph{within} clusters of positively correlated specimens.

We highlight three more recent directions.
First, Lin et al. \cite{lin2021positively} study Dorfman testing for a correlated arrival process of contiguous groups from different IID populations.
They report higher efficiency.
Other simulation \cite{rewley2020specimen,deckert2020simulation} and theoretical \cite{wan2022correlation} investigations also report that pooling within positively correlated groups increases efficiency.
Second, Ahn et al.  \cite{ahn2021adaptive,ahn2023adaptive} study a so-called \emph{stochastic block infection model}.
They analyze a modified binary splitting algorithm which uses knowledge of a specimen's community membership.
For other generalizations of the IID model related to theirs, see \cite{gonen2022group,lau2022model-based,nikpey2023group}.
Third, and related, Nikolopoulos et al. \cite{nikolopoulos2021group,nikolopoulos2023community-aware} study \emph{community-aware group testing}, in which a hypergraph encoding overlapping communities is known, and propose algorithms leveraging this side information.

These examples are characteristic of a growing body of work incorporating information such as cluster identity \cite{arasli2021graph,attia2021heterogeneity,baccini2021pool,best2023efficient}, an underlying network topology \cite{bertolotti2021network,bertolotti2022inference,sewell2022leveraging}, or contact-tracing \cite{goenka2021contact,tsirtsi2022pooled,cao2023group} into models.
Also, several authors study disease spread and so consider \emph{dynamic} models \cite{srinivasavaradhan2021entropy,bi2022approximate,doger2022dynamical,srinivasavaradhan2022dynamic,arasli2022group,arasli2023dynamic,arasli2023group}.
Prior to the pandemic, side information was usually incorporated via specimen-specific probabilities of testing positive \cite{hwang1975generalized,bilder2010informative,mcmahan2012informative}.

Finally, we mention that Comess et al. \cite{comess2022statistical} also investigate the unexpectedly high efficiency observed by Barak et al. \cite{barak2021lessons}, the source of the data we consider in \Cref{section:experiment}.
They propose and analyze a community network model.
They use this model to also investigate the higher-than-expected \emph{sensitivity} observed by Barak et al. \cite{barak2021lessons}.
We do not consider sensitivity herein.

\section{Conclusion}
\label{section:conclusion}
In this paper, we develop and apply tools for Dorfman's two-stage adaptive group testing protocol.
In particular, we study the problem under the modeling assumption that the statuses are exchangeable and so their distribution is symmetric.

This modeling assumption is both amenable to analysis and relevant for infectious disease screening.
Although symmetric distributions are a simple model of reality, they nonetheless allow for correlation among specimen statuses.
Such correlations appear in disease screening because specimens originating from the same family, living space, or workplace often arrive for testing, and hence for pooling, together.
Since the disease is contagious, positive statuses co-occur.
Accounting for this phenomenon in the probabilistic model may indicate better efficiency and larger pool sizes than proposed by the classical theory.
The dataset we studied in \Cref{section:experiment} exhibits this feature.
In summary, symmetric distributions are a prototypical class on the path to further research into and analysis of more complicated models.

\subsection{Future directions}
We focus on topics related to Dorfman's procedure.
It may also be of interest, however, to analyze other group testing protocols like Sterrett's procedure \cite{sterrett1957detection} or Sobel and Groll's binary splitting \cite{sobel1959group} under exchangeability.
Also, exchangeability might be used to tighten bounds on the number of tests required in the nonadaptive setting \cite{atia2012boolean}.

\paragraph{Notable variants and generalization}
We list three variants and a generalization of the symmetry considered in this paper.
The three variants are (1) \textit{infinite} population exchangeability, (2) \emph{test error} models for exchangeable statuses, and (3) \emph{risk-adjusted objectives} incorporating, e.g., the variance of the number of tests used.
Even within the finite population, error-free, minimize-expected-tests setting of this paper, an interesting generalization of this paper may study distributions which are invariant under an \emph{arbitrary} permutation group.

\paragraph{Characterizing savings and robustness}
How much can we save by correctly modeling the statuses as exchangeable instead of independent?
Toward answering this, suppose $x$ has symmetric distribution $p$ and denote by $\bar{p}$ the IID distribution whose prevalence matches that of $p$.
Suppose $G^\star$ and $\bar{G}^\star$ are corresponding optimal partitions under $p$ and $\bar{p}$, respectively.
One approach to the question of savings is to study the quantity $\Delta(p) := \E C(\bar{G}^\star, x) - \E C(G^\star, x)$.
What is $\sup_p \Delta(p)$? Which symmetric distributions achieve this value?
There are no savings if $\bar{G}^\star = G^\star$, but \Cref{section:experiment} indicates that distributions with savings exist and appear empirically.

Also, how robust are these approaches to uncertainty in estimated parameters?
Given an interval containing the population prevalence or a set containing the symmetric distribution, what are the optimal \emph{worst-case} partitions?
The linear programming approach (see \Cref{subsection:solutions}) may be useful for these questions and the foregoing one.

\paragraph{Using features to estimate the probability a group tests negative}
Lastly, we sketch a direction toward more complicated distributions.
Although specimens have identical marginals under the symmetric models considered in this paper, it is natural to relax this assumption as well.
Classically, Hwang \cite{hwang1975generalized} proposed using specimen-specific negative-status probabilities.
He showed that, assuming independence, one can efficiently compute partitions to minimize the expected number of tests used.
With modern tools, one might use \emph{features} and \emph{logistic regression} to estimate these probabilities.
See \cite{bilder2010informative} for an approach along these lines.

To generalize, one may drop the independence assumption and directly estimate the probability that a \emph{group} tests negative by, for example, performing logistic regression on sets of individual specimen features.
Regression models that do not depend on the order of an input list of feature vectors are called \emph{permutation-invariant} \cite{zaheer2017deep,bloem-reddy2020probabilistic}.
Given such a model indicating the probability that a group tests negative, one might then employ general-purpose partitioning algorithms to find partitions which minimize the expected number of tests used.

\clearpage


\bibliographystyle{siamplain}
\bibliography{references}

\begin{thebibliography}{100}

\bibitem{abdalhamid2020assessment}
{\sc B.~Abdalhamid, C.~R. Bilder, E.~L. McCutchen, S.~H. Hinrichs, S.~A.
  Koepsell, and P.~C. Iwen}, {\em Assessment of specimen pooling to conserve
  {SARS} {CoV-2} testing resources}, American Journal of Clinical Pathology,
  153 (2020), pp.~715--718, \url{https://doi.org/10.1093/ajcp/aqaa064}.

\bibitem{abraham2020bloom}
{\sc L.~Abraham, G.~Becigneul, B.~Coleman, B.~Scholkopf, A.~Shrivastava, and
  A.~Smola}, {\em Bloom origami assays: Practical group testing}, 2020,
  \url{https://arxiv.org/abs/2008.02641}.

\bibitem{ahn2021adaptive}
{\sc S.~Ahn, W.-N. Chen, and A.~Özgür}, {\em Adaptive group testing on
  networks with community structure}, in 2021 IEEE International Symposium on
  Information Theory (ISIT), 2021, pp.~1242--1247,
  \url{https://doi.org/10.1109/ISIT45174.2021.9517888}.

\bibitem{ahn2023adaptive}
{\sc S.~Ahn, W.-N. Chen, and A.~Özgür}, {\em Adaptive group testing on
  networks with community structure: The stochastic block model}, IEEE
  Transactions on Information Theory, 69 (2023), pp.~4758--4776,
  \url{https://doi.org/10.1109/TIT.2023.3247520}.

\bibitem{aldous1985exchangeability}
{\sc D.~J. Aldous}, {\em Exchangeability and related topics}, in {\'E}cole
  d'{\'E}t{\'e} de Probabilit{\'e}s de Saint-Flour XIII — 1983, Berlin,
  Heidelberg, 1985, Springer Berlin Heidelberg, pp.~1--198,
  \url{https://doi.org/10.1007/BFb0099421}.

\bibitem{aldridge2022conservative}
{\sc M.~Aldridge}, {\em Conservative two-stage group testing in the linear
  regime}, 2022, \url{https://doi.org/10.48550/arXiv.2005.06617},
  \url{https://arxiv.org/abs/2005.06617}.

\bibitem{aldridge2019group}
{\sc M.~Aldridge, O.~Johnson, and J.~Scarlett}, {\em Group testing: An
  information theory perspective}, Foundations and Trends in Communications and
  Information Theory, 15 (2019), pp.~196--392,
  \url{https://doi.org/10.1561/0100000099}.

\bibitem{andrews1984theory}
{\sc G.~E. Andrews}, {\em The Theory of Partitions}, Encyclopedia of
  Mathematics and its Applications, Cambridge University Press, 1984,
  \url{https://doi.org/10.1017/CBO9780511608650}.

\bibitem{anily1991structured}
{\sc S.~Anily and A.~Federgruen}, {\em Structured partitioning problems},
  Operations Research, 39 (1991), pp.~130--149,
  \url{https://doi.org/10.1287/opre.39.1.130}.

\bibitem{aprahamian2019optimal}
{\sc H.~Aprahamian, D.~R. Bish, and E.~K. Bish}, {\em Optimal risk-based group
  testing}, Management Science, 65 (2019), pp.~4365--4384,
  \url{https://doi.org/10.1287/mnsc.2018.3138}.

\bibitem{arasli2023group}
{\sc B.~Arasli}, {\em Group Testing in Structured and Dynamic Networks}, PhD
  thesis, University of Maryland, College Park, 2023.

\bibitem{arasli2021graph}
{\sc B.~Arasli and S.~Ulukus}, {\em Graph and cluster formation based group
  testing}, in 2021 IEEE International Symposium on Information Theory (ISIT),
  2021, pp.~1236--1241, \url{https://doi.org/10.1109/ISIT45174.2021.9518128}.

\bibitem{arasli2022group}
{\sc B.~Arasli and S.~Ulukus}, {\em Group testing with a dynamic infection
  spread}, in 2022 IEEE International Symposium on Information Theory (ISIT),
  2022, pp.~2249--2254, \url{https://doi.org/10.1109/ISIT50566.2022.9834486}.

\bibitem{arasli2023dynamic}
{\sc B.~Arasli and S.~Ulukus}, {\em Dynamic infection spread model based group
  testing}, Algorithms, 16 (2023).

\bibitem{atia2012boolean}
{\sc G.~K. Atia and V.~Saligrama}, {\em Boolean compressed sensing and noisy
  group testing}, IEEE Transactions on Information Theory, 58 (2012),
  pp.~1880--1901, \url{https://doi.org/10.1109/TIT.2011.2178156}.

\bibitem{attia2021heterogeneity}
{\sc M.~A. Attia, W.-T. Chang, and R.~Tandon}, {\em Heterogeneity aware
  two-stage group testing}, IEEE Transactions on Signal Processing, 69 (2021),
  pp.~3977--3990, \url{https://doi.org/10.1109/TSP.2021.3093785}.

\bibitem{augenblick2020group}
{\sc N.~Augenblick, J.~T. Kolstad, Z.~Obermeyer, and A.~Wang}, {\em Group
  testing in a pandemic: The role of frequent testing, correlated risk, and
  machine learning}, tech. report, National Bureau of Economic Research, 2020,
  \url{https://doi.org/10.3386/w27457}.

\bibitem{austin2020pooling}
{\sc D.~Austin}, {\em Pooling strategies for {COVID-19} testing}, 2020,
  \url{http://www.ams.org/publicoutreach/feature-column/fc-2020-10}.

\bibitem{baccini2021pool}
{\sc M.~Baccini, E.~Rocco, I.~Paganini, A.~Mattei, C.~Sani, G.~Vannucci,
  S.~Bisanzi, E.~Burroni, M.~Peluso, A.~Munnia, F.~Cellai, G.~Pompeo, L.~Micio,
  J.~Viti, F.~Mealli, and F.~M. Carozzi}, {\em Pool testing on random and
  natural clusters of individuals: Optimisation of sars-cov-2 surveillance in
  the presence of low viral load samples}, PLOS ONE, 16 (2021), pp.~1--15,
  \url{https://doi.org/10.1371/journal.pone.0251589}.

\bibitem{balas1976set}
{\sc E.~Balas and M.~W. Padberg}, {\em Set partitioning: A survey}, SIAM
  Review, 18 (1976), pp.~710--760, \url{https://doi.org/10.1137/1018115}.

\bibitem{balding1996optimal}
{\sc D.~J. Balding and D.~C. Torney}, {\em Optimal pooling designs with error
  detection}, Journal of Combinatorial Theory, Series A, 74 (1996),
  pp.~131--140, \url{https://doi.org/10.1006/jcta.1996.0041}.

\bibitem{barak2021lessons}
{\sc N.~Barak, R.~Ben-Ami, T.~Sido, A.~Perri, A.~Shtoyer, M.~Rivkin, T.~Licht,
  A.~Peretz, J.~Magenheim, I.~Fogel, A.~Livneh, Y.~Daitch, E.~Oiknine-Djian,
  G.~Benedek, Y.~Dor, D.~G. G., M.~Yassour, et~al.}, {\em Lessons from applied
  large-scale pooling of 133,816 {SARS-CoV-2} {RT-PCR} tests}, Science
  Translational Medicine, 13 (2021),
  \url{https://doi.org/10.1126/scitranslmed.abf2823}.

\bibitem{bateman2021assessing}
{\sc A.~C. Bateman, S.~Mueller, K.~Guenther, and P.~Shult}, {\em Assessing the
  dilution effect of specimen pooling on the sensitivity of sars-cov-2 pcr
  tests}, Journal of Medical Virology, 93 (2021), pp.~1568--1572,
  \url{https://doi.org/10.1002/jmv.26519}.

\bibitem{becker1948arithmetic}
{\sc H.~W. Becker and J.~Riordan}, {\em The arithmetic of bell and stirling
  numbers}, American Journal of Mathematics, 70 (1948), pp.~385--394,
  \url{https://doi.org/10.2307/2372336}.

\bibitem{ben-ami2020large-scale}
{\sc R.~Ben-Ami, A.~Klochendler, M.~Seidel, T.~Sido, O.~Gurel-Gurevich,
  M.~Yassour, E.~Meshorer, G.~Benedek, I.~Fogel, E.~Oiknine-Djian, A.~Gertler,
  Z.~Rotstein, B.~Lavi, Y.~Dor, D.~Wolf, M.~Salton, Y.~Drier, A.~Klochendler,
  A.~Eden, A.~Klar, A.~Geldman, A.~Arbel, A.~Peretz, B.~Shalom, B.~Ochana,
  D.~Avrahami-Tzfati, D.~Neiman, D.~Steinberg, D.~B. ~, E.~Shpigel, G.~Atlan,
  H.~Klein, H.~Chekroun, H.~Shani, I.~Hazan, I.~Ansari, I.~Magenheim, J.~Moss,
  J.~Magenheim, L.~Peretz, L.~Feigin, M.~Saraby, M.~Sherman, M.~Bentata,
  M.~Avital, M.~Kott, M.~Peyser, M.~Weitz, M.~Shacham, M.~Grunewald, N.~Sasson,
  N.~Wallis, N.~Azazmeh, N.~Tzarum, O.~Fridlich, R.~Sher, R.~Condiotti,
  R.~Refaeli, R.~{Ben Ami}, R.~Zaken-Gallili, R.~Helman, S.~Ofek, S.~Tzaban,
  S.~Piyanzin, S.~Anzi, S.~Dagan, S.~Lilenthal, T.~Sido, T.~Licht,
  T.~Friehmann, Y.~Kaufman, A.~Pery, A.~Saada, A.~Dekel, A.~Yeffet, A.~Shaag,
  A.~Michael-Gayego, E.~Shay, E.~Arbib, H.~Onallah, K.~Ben-Meir, L.~Levinzon,
  L.~Cohen-Daniel, L.~Natan, M.~Hamdan, M.~Rivkin, M.~Shwieki, O.~Vorontsov,
  R.~Barsuk, R.~Abramovitch, R.~Gutorov, S.~Sirhan, S.~Abdeen, Y.~Yachnin, and
  Y.~Daitch}, {\em Large-scale implementation of pooled {RNA} extraction and
  {RT-PCR} for {SARS-CoV-2} detection}, Clinical Microbiology and Infection, 26
  (2020), pp.~1248--1253, \url{https://doi.org/10.1016/j.cmi.2020.06.009}.

\bibitem{berger2002asymptotic}
{\sc T.~Berger and V.~I. Levenshtein}, {\em Asymptotic efficiency of two-stage
  disjunctive testing}, IEEE Transactions on Information Theory, 48 (2002),
  pp.~1741--1749, \url{https://doi.org/10.1109/TIT.2002.1013122}.

\bibitem{berger1984random}
{\sc T.~Berger, N.~Mehravari, D.~Towsley, and J.~Wolf}, {\em Random
  multiple-access communication and group testing}, IEEE Transactions on
  Communications, 32 (1984), pp.~769--779,
  \url{https://doi.org/10.1109/TCOM.1984.1096146}.

\bibitem{bertolotti2022inference}
{\sc P.~Bertolotti}, {\em Inference and Diffusion in Networks}, PhD thesis,
  Massachusetts Institute of Technology, February 2022.

\bibitem{bertolotti2021network}
{\sc P.~Bertolotti and A.~Jadbabaie}, {\em Network group testing}, 2021,
  \url{https://arxiv.org/abs/2012.02847}.

\bibitem{best2023efficient}
{\sc A.~F. Best, Y.~Malinovsky, and P.~S. Albert}, {\em The efficient design of
  nested group testing algorithms for disease identification in clustered
  data}, Journal of Applied Statistics, 50 (2023), pp.~2228--2245,
  \url{https://doi.org/10.1080/02664763.2022.2071419}.

\bibitem{bi2022approximate}
{\sc X.~Bi, E.~Miehling, C.~Beck, and T.~Başar}, {\em Approximate testing in
  uncertain epidemic processes}, in 2022 IEEE 61st Conference on Decision and
  Control (CDC), 2022, pp.~4339--4344,
  \url{https://doi.org/10.1109/CDC51059.2022.9992464}.

\bibitem{bilder2010informative}
{\sc C.~R. Bilder, J.~T. M., and P.~Chen}, {\em Informative retesting}, Journal
  of the American Statistical Association, 105 (2010), pp.~942--955,
  \url{https://doi.org/10.1198/jasa.2010.ap09231}.

\bibitem{black2012group}
{\sc M.~S. Black, C.~R. Bilder, and J.~M. Tebbs}, {\em Group testing in
  heterogeneous populations by using halving algorithms}, Journal of the Royal
  Statistical Society. Series C (Applied Statistics), 61 (2012), pp.~277--290,
  \url{http://www.jstor.org/stable/41430963}.

\bibitem{bloem-reddy2020probabilistic}
{\sc B.~Bloem-Reddy and Y.~W. Teh}, {\em Probabilistic symmetries and invariant
  neural networks}, J. Mach. Learn. Res., 21 (2020).

\bibitem{broder2020note}
{\sc A.~Z. Broder and R.~Kumar}, {\em A note on double pooling tests}, 2020.

\bibitem{bruno1995efficient}
{\sc W.~Bruno, E.~Knill, D.~Balding, D.~Bruce, N.~Doggett, W.~Sawhill,
  R.~Stallings, C.~Whittaker, and D.~Torney}, {\em Efficient pooling designs
  for library screening}, Genomics, 26 (1995), pp.~21--30,
  \url{https://doi.org/10.1016/0888-7543(95)80078-Z}.

\bibitem{burtniak2023dorfman}
{\sc J.~Burtniak, A.~Hedley, K.~Dust, P.~Van~Caeseele, J.~Bullard, and D.~R.
  Stein}, {\em Dorfman pooling enhances sars-cov-2 large-scale community
  testing efficiency}, PLOS Global Public Health, 3 (2023), pp.~e0001793--,
  \url{https://doi.org/10.1371/journal.pgph.0001793}.

\bibitem{cao2023group}
{\sc S.-J. Cao, R.~Goenka, C.-W. Wong, A.~Rajwade, and D.~Baron}, {\em Group
  testing with side information via generalized approximate message passing},
  IEEE Transactions on Signal Processing, 71 (2023), pp.~2366--2375,
  \url{https://doi.org/10.1109/TSP.2023.3287671}.

\bibitem{chan2011non-adaptive}
{\sc C.~L. Chan, P.~H. Che, S.~Jaggi, and V.~Saligrama}, {\em Non-adaptive
  probabilistic group testing with noisy measurements: Near-optimal bounds with
  efficient algorithms}, in 2011 49th Annual Allerton Conference on
  Communication, Control, and Computing (Allerton), 2011, pp.~1832--1839,
  \url{https://doi.org/10.1109/Allerton.2011.6120391}.

\bibitem{chang1980group}
{\sc G.~J. Chang and F.~K. Hwang}, {\em A group testing problem}, SIAM Journal
  on Algebraic Discrete Methods, 1 (1980), pp.~21--24,
  \url{https://doi.org/10.1137/0601004}.

\bibitem{chang1981group}
{\sc G.~J. Chang and F.~K. Hwang}, {\em A group testing problem on two disjoint
  sets}, SIAM Journal on Algebraic Discrete Methods, 2 (1981), pp.~35--38,
  \url{https://doi.org/10.1137/0602005}.

\bibitem{chen1990using}
{\sc C.~L. Chen and W.~H. Swallow}, {\em Using group testing to estimate a
  proportion, and to test the binomial model}, Biometrics, 46 (1990),
  pp.~1035--1046, \url{https://doi.org/10.2307/2532446}.

\bibitem{cheraghchi2011group}
{\sc M.~Cheraghchi, A.~Hormati, A.~Karbasi, and M.~Vetterli}, {\em Group
  testing with probabilistic tests: Theory, design and application}, IEEE
  Transactions on Information Theory, 57 (2011), pp.~7057--7067,
  \url{https://doi.org/10.1109/TIT.2011.2148691}.

\bibitem{cheraghchi2012graph-constrained}
{\sc M.~Cheraghchi, A.~Karbasi, S.~Mohajer, and V.~Saligrama}, {\em
  Graph-constrained group testing}, IEEE Transactions on Information Theory, 58
  (2012), pp.~248--262, \url{https://doi.org/10.1109/TIT.2011.2169535}.

\bibitem{cohen2020efficient}
{\sc A.~Cohen, A.~Cohen, and O.~Gurewitz}, {\em Efficient data collection over
  multiple access wireless sensors network}, IEEE/ACM Transactions on
  Networking, 28 (2020), pp.~491--504,
  \url{https://doi.org/10.1109/TNET.2020.2964764}.

\bibitem{cohen2021serial}
{\sc A.~Cohen, N.~Shlezinger, S.~Salamatian, Y.~C. Eldar, and M.~Médard}, {\em
  Serial quantization for sparse time sequences}, IEEE Transactions on Signal
  Processing, 69 (2021), pp.~3299--3314,
  \url{https://doi.org/10.1109/TSP.2021.3083985}.

\bibitem{comess2022statistical}
{\sc S.~Comess, H.~Wang, S.~Holmes, and C.~Donnat}, {\em Statistical modeling
  for practical pooled testing during the {COVID-19} pandemic}, Statistical
  Science, 37 (2022), pp.~229--250, \url{https://doi.org/10.1214/22-STS857}.

\bibitem{cormode2005whats}
{\sc G.~Cormode and S.~Muthukrishnan}, {\em What's hot and what's not: Tracking
  most frequent items dynamically}, ACM Trans. Database Syst., 30 (2005),
  pp.~249--278, \url{https://doi.org/10.1145/1061318.1061325}.

\bibitem{daniel2021pooled}
{\sc E.~A. Daniel, B.~H. Esakialraj~L, A.~S, K.~Muthuramalingam,
  R.~Karunaianantham, L.~P. Karunakaran, M.~Nesakumar, M.~Selvachithiram,
  S.~Pattabiraman, S.~Natarajan, S.~P. Tripathy, and L.~E. Hanna}, {\em Pooled
  testing strategies for sars-cov-2 diagnosis: A comprehensive review},
  Diagnostic Microbiology and Infectious Disease, 101 (2021),
  \url{https://doi.org/10.1016/j.diagmicrobio.2021.115432}.

\bibitem{debonis2005optimal}
{\sc A.~De~Bonis, L.~Gasieniec, and U.~Vaccaro}, {\em Optimal two-stage
  algorithms for group testing problems}, SIAM Journal on Computing, 34 (2005),
  pp.~1253--1270, \url{https://doi.org/10.1137/S0097539703428002}.

\bibitem{definetti1931funzione}
{\sc B.~de~Finetti}, {\em Funzione caratteristica di un fenomeno aleatorio},
  Memorie. Academia Nazionale del Linceo, 6 (1931), pp.~251--299.

\bibitem{definetti1972probability}
{\sc B.~De~Finetti}, {\em Probability, Induction and Statistics: The Art of
  Guessing}, Wiley, 1972.

\bibitem{deckert2020simulation}
{\sc A.~Deckert, T.~B{\"a}rnighausen, and N.~Kyei}, {\em Simulation of
  pooled-sample analysis strategies for covid-19 mass testing}, Bulletin of the
  World Health Organization, 98 (2020), pp.~590--598,
  \url{https://doi.org/10.2471/BLT.20.257188}.

\bibitem{diaconis1977finite}
{\sc P.~Diaconis}, {\em Finite forms of de finetti’s theorem on
  exchangeability}, Synthese, 36 (1977), pp.~271--281,
  \url{https://doi.org/10.1007/BF00486116}.

\bibitem{diaconis1980finite}
{\sc P.~Diaconis and D.~Freedman}, {\em Finite exchangeable sequences}, The
  Annals of Probability, 8 (1980), pp.~745--764.

\bibitem{doger2021group}
{\sc M.~Doger and S.~Ulukus}, {\em Group testing with non-identical infection
  probabilities}, in 2021 XVII International Symposium “Problems of
  Redundancy in Information and Control Systems” (REDUNDANCY), 2021,
  pp.~110--115, \url{https://doi.org/10.1109/REDUNDANCY52534.2021.9606443}.

\bibitem{doger2022dynamical}
{\sc M.~Doger and S.~Ulukus}, {\em Dynamical dorfman testing with quarantine},
  in 2022 56th Annual Conference on Information Sciences and Systems (CISS),
  2022, pp.~31--36, \url{https://doi.org/10.1109/CISS53076.2022.9751175}.

\bibitem{dorfman1943detection}
{\sc R.~Dorfman}, {\em The detection of defective members of large
  populations}, The Annals of Mathematical Statistics, 14 (1943), pp.~436--440,
  \url{https://doi.org/10.1214/aoms/1177731363}.

\bibitem{du2000combinatorial}
{\sc D.~Du and F.~K. Hwang}, {\em Combinatorial group testing and its
  applications}, vol.~12 of Applied Mathematics, World Scientific, 2nd~ed.,
  2000, \url{https://doi.org/10.1142/4252}.

\bibitem{dyachkov2014lectures}
{\sc A.~G. D'yachkov}, {\em Lectures on designing screening experiments}, 2014,
  \url{https://arxiv.org/abs/1401.7505}.

\bibitem{elhajj2022screening}
{\sc H.~El~Hajj, D.~R. Bish, E.~K. Bish, and H.~Aprahamian}, {\em Screening
  multi-dimensional heterogeneous populations for infectious diseases under
  scarce testing resources, with application to {COVID-19}}, Naval Research
  Logistics (NRL), 69 (2022), pp.~3--20,
  \url{https://doi.org/10.1002/nav.21985}.

\bibitem{ellenberg2020five}
{\sc J.~Ellenberg}, {\em Five people. one test. this is how you get there.},
  2020,
  \url{https://www.nytimes.com/2020/05/07/opinion/coronavirus-group-testing.html}.

\bibitem{amid2014poisson}
{\sc A.~Emad and O.~Milenkovic}, {\em Poisson group testing: A probabilistic
  model for nonadaptive streaming boolean compressed sensing}, in 2014 IEEE
  International Conference on Acoustics, Speech and Signal Processing (ICASSP),
  2014, pp.~3335--3339, \url{https://doi.org/10.1109/ICASSP.2014.6854218}.

\bibitem{amin2014semiquantitative}
{\sc A.~Emad and O.~Milenkovic}, {\em Semiquantitative group testing}, IEEE
  Transactions on Information Theory, 60 (2014), pp.~4614--4636,
  \url{https://doi.org/10.1109/TIT.2014.2327630}.

\bibitem{engel2014optimal}
{\sc K.~Engel, T.~Radzik, and J.-C. Schlage-Puchta}, {\em Optimal integer
  partitions}, European Journal of Combinatorics, 36 (2014), pp.~425--436,
  \url{https://doi.org/10.1016/j.ejc.2013.09.004}.

\bibitem{eppstein2007improved}
{\sc D.~Eppstein, M.~T. Goodrich, and D.~S. Hirschberg}, {\em Improved
  combinatorial group testing algorithms for real‐world problem sizes}, SIAM
  Journal on Computing, 36 (2007), pp.~1360--1375,
  \url{https://doi.org/10.1137/050631847}.

\bibitem{ericson1976bayesian}
{\sc W.~A. Ericson}, {\em A bayesian approach to two-stage sampling}, tech.
  report, University of Michigan, Ann Arbor, Department of Statistics, 1976.

\bibitem{farach1997group}
{\sc M.~Farach, S.~Kannan, E.~Knill, and S.~Muthukrishnan}, {\em Group testing
  problems with sequences in experimental molecular biology}, in Proceedings.
  Compression and Complexity of SEQUENCES 1997 (Cat. No.97TB100171), 1997,
  pp.~357--367, \url{https://doi.org/10.1109/SEQUEN.1997.666930}.

\bibitem{feller1968introduction}
{\sc W.~Feller}, {\em An Introduction to Probability Theory and Its
  Applications}, vol.~1, John Wiley, New York, 3rd~ed., 1968.

\bibitem{finucan1964blood}
{\sc H.~M. Finucan}, {\em The blood testing problem}, Journal of the Royal
  Statistical Society. Series C (Applied Statistics), 13 (1964), pp.~43--50,
  \url{https://doi.org/10.2307/2985222}.

\bibitem{freedman1995some}
{\sc D.~Freedman}, {\em Some Issues in the Foundation of Statistics}, Springer
  Netherlands, Dordrecht, 1997, pp.~19--39,
  \url{https://doi.org/10.1007/978-94-015-8816-4_4}.

\bibitem{gandikota2016nearly}
{\sc V.~Gandikota, E.~Grigorescu, S.~Jaggi, and S.~Zhou}, {\em Nearly optimal
  sparse group testing}, in 2016 54th Annual Allerton Conference on
  Communication, Control, and Computing (Allerton), 2016, pp.~401--408,
  \url{https://doi.org/10.1109/ALLERTON.2016.7852259}.

\bibitem{gandikota2019nearly}
{\sc V.~Gandikota, E.~Grigorescu, S.~Jaggi, and S.~Zhou}, {\em Nearly optimal
  sparse group testing}, IEEE Transactions on Information Theory, 65 (2019),
  pp.~2760--2773, \url{https://doi.org/10.1109/TIT.2019.2891651}.

\bibitem{garey1979computers}
{\sc M.~R. Garey and D.~S. Johnson}, {\em Computers and Intractability: A Guide
  to the Theory of NP-Completeness}, Books in the Mathematical Sciences, W. H.
  Freeman, 1979.

\bibitem{ghosh2021compressed}
{\sc S.~Ghosh, R.~Agarwal, M.~A. Rehan, S.~Pathak, P.~Agarwal, Y.~Gupta,
  S.~Consul, N.~Gupta, Ritika, R.~Goenka, A.~Rajwade, and M.~Gopalkrishnan},
  {\em A compressed sensing approach to pooled {RT-PCR} testing for {COVID-19}
  detection}, IEEE Open Journal of Signal Processing, 2 (2021), pp.~248--264,
  \url{https://doi.org/10.1109/OJSP.2021.3075913}.

\bibitem{gilbert2008group}
{\sc A.~C. Gilbert, M.~A. Iwen, and M.~J. Strauss}, {\em Group testing and
  sparse signal recovery}, in 2008 42nd Asilomar Conference on Signals, Systems
  and Computers, 2008, pp.~1059--1063,
  \url{https://doi.org/10.1109/ACSSC.2008.5074574}.

\bibitem{gill1974identification}
{\sc A.~Gill and D.~Gottlieb}, {\em The identification of a set by successive
  intersections}, Information and Control, 24 (1974), pp.~20--35,
  \url{https://doi.org/10.1016/S0019-9958(74)80020-3}.

\bibitem{gilstein1985optimal}
{\sc C.~Z. Gilstein}, {\em Optimal partitions of finite populations for
  dorfman-type group testing}, Journal of Statistical Planning and Inference,
  12 (1985), pp.~385--394, \url{https://doi.org/10.1016/0378-3758(85)90087-4}.

\bibitem{goenka2021contact}
{\sc R.~Goenka, S.-J. Cao, C.-W. Wong, A.~Rajwade, and D.~Baron}, {\em Contact
  tracing enhances the efficiency of {COVID-19} group testing}, in ICASSP 2021
  - 2021 IEEE International Conference on Acoustics, Speech and Signal
  Processing (ICASSP), 2021, pp.~8168--8172,
  \url{https://doi.org/10.1109/ICASSP39728.2021.9414034}.

\bibitem{gonen2022group}
{\sc M.~Gonen, M.~Langberg, and A.~Sprintson}, {\em Group testing on general
  set-systems}, in 2022 IEEE International Symposium on Information Theory
  (ISIT), 2022, \url{https://doi.org/10.1109/ISIT50566.2022.9834789}.

\bibitem{graff1972group}
{\sc L.~E. Graff and R.~Roeloffs}, {\em Group testing in the presence of test
  error; an extension of the {Dorfman} procedure}, Technometrics, 14 (1972),
  pp.~113--122, \url{https://doi.org/10.1080/00401706.1972.10488888}.

\bibitem{green1990systematic}
{\sc E.~D. Green and M.~V. Olson}, {\em Systematic screening of yeast
  artificial-chromosome libraries by use of the polymerase chain reaction.},
  Proceedings of the National Academy of Sciences, 87 (1990), pp.~1213--1217,
  \url{https://doi.org/10.1073/pnas.87.3.1213}.

\bibitem{gurnow1965note}
{\sc R.~N. Gurnow}, {\em A note on {G.} {S.} {Watson}'s paper {‘A Study of
  the Group Screening Method”}}, Technometrics, 7 (1965), pp.~444--446,
  \url{https://doi.org/10.1080/00401706.1965.10490276}.

\bibitem{hanel2020boosting}
{\sc R.~Hanel and S.~Thurner}, {\em Boosting test-efficiency by pooled testing
  for sars-cov-2—formula for optimal pool size}, PLoS One, 15 (2020),
  \url{https://doi.org/10.1371/journal.pone.0240652}.

\bibitem{hardy1979introduction}
{\sc G.~H. Hardy and E.~M. Wright}, {\em An introduction to the theory of
  numbers}, Oxford University Press, 5th~ed., 1979.

\bibitem{harvey2007non-adaptive}
{\sc N.~J. Harvey, M.~Patrascu, Y.~Wen, S.~Yekhanin, and V.~W. Chan}, {\em
  Non-adaptive fault diagnosis for all-optical networks via combinatorial group
  testing on graphs}, in IEEE INFOCOM 2007 - 26th IEEE International Conference
  on Computer Communications, 2007, pp.~697--705,
  \url{https://doi.org/10.1109/INFCOM.2007.87}.

\bibitem{hayes1978adaptive}
{\sc J.~Hayes}, {\em An adaptive technique for local distribution}, IEEE
  Transactions on Communications, 26 (1978), pp.~1178--1186,
  \url{https://doi.org/10.1109/TCOM.1978.1094204}.

\bibitem{hewitt1955symmetric}
{\sc E.~Hewitt and L.~J. Savage}, {\em Symmetric measures on cartesian
  products}, Transactions of the American Mathematical Society, 80 (1955),
  pp.~470--501, \url{https://doi.org/10.2307/1992999}.

\bibitem{hitt2019objective}
{\sc B.~D. Hitt, C.~R. Bilder, J.~M. Tebbs, and C.~S. McMahan}, {\em The
  objective function controversy for group testing: Much ado about nothing?},
  Statistics in medicine, 38 (2019), pp.~4912--4923,
  \url{https://doi.org/10.1002/sim.8341}.

\bibitem{hogan2020sample}
{\sc C.~A. Hogan, M.~K. Sahoo, and B.~A. Pinsky}, {\em Sample pooling as a
  strategy to detect community transmission of {SARS-CoV-2}}, Journal of the
  American Medical Association (JAMA), 323 (2020), pp.~1967--1969,
  \url{https://doi.org/10.1001/jama.2020.5445}.

\bibitem{hong2022group}
{\sc D.~Hong, R.~Dey, X.~Lin, B.~Cleary, and E.~Dobriban}, {\em Group testing
  via hypergraph factorization applied to {COVID-19}}, Nature Communications,
  13 (2022).

\bibitem{hwang1972method}
{\sc F.~K. Hwang}, {\em A method for detecting all defective members in a
  population by group testing}, Journal of the American Statistical
  Association, 67 (1972), pp.~605--608,
  \url{https://doi.org/10.1080/01621459.1972.10481257}.

\bibitem{hwang1974finding}
{\sc F.~K. Hwang}, {\em On finding a single defective in binomial group
  testing}, Journal of the American Statistical Association, 69 (1974),
  pp.~146--150, \url{https://doi.org/10.1080/01621459.1974.10480141}.

\bibitem{hwang1975generalized}
{\sc F.~K. Hwang}, {\em A generalized binomial group testing problem}, Journal
  of the American Statistical Association, 70 (1975), pp.~923--926,
  \url{https://doi.org/10.1080/01621459.1975.10480324}.

\bibitem{hwang1976group}
{\sc F.~K. Hwang}, {\em Group testing with a dilution effect}, Biometrika, 63
  (1976), pp.~671--680, \url{https://doi.org/10.1093/biomet/63.3.671}.

\bibitem{hwang1976optimum}
{\sc F.~K. Hwang}, {\em An optimum nested procedure in binomial group testing},
  Biometrics, 32 (1976), pp.~939--943, \url{https://doi.org/10.2307/2529277}.

\bibitem{hwang1978note}
{\sc F.~K. Hwang}, {\em A note on hypergeometric group testing procedures},
  SIAM Journal on Applied Mathematics, 34 (1978), pp.~371--375,
  \url{https://doi.org/10.1137/0134030}.

\bibitem{hwang1980optimal}
{\sc F.~K. Hwang}, {\em Optimal group testing procedures for identifying a
  single defective from a finite population}, Bulletin of the Institute of
  Mathematics Academia Sinica, 8 (1980), pp.~129--140.

\bibitem{hwang1981optimal}
{\sc F.~K. Hwang}, {\em Optimal partitions}, Journal of Optimization Theory and
  Applications, 34 (1981), \url{https://doi.org/10.1007/BF00933355}.

\bibitem{hwang1984robust}
{\sc F.~K. Hwang}, {\em Robust group testing}, Journal of Quality Technology,
  16 (1984), pp.~189--195,
  \url{https://doi.org/10.1080/00224065.1984.11978917}.

\bibitem{hwang2011partitions}
{\sc F.~K. Hwang and U.~G. Rothblum}, {\em Partitions: optimality and
  clustering (v. 1 Single-parameter)}, vol.~19 of Applied Mathematics, World
  Scientific, 2011, \url{https://doi.org/10.1142/6518}.

\bibitem{hwang1981hypergeometric}
{\sc F.~K. Hwang, T.~T. Song, and D.~Z. Du}, {\em Hypergeometric and
  generalized hypergeometric group testing}, SIAM Journal on Algebraic Discrete
  Methods, 2 (1981), pp.~426--428, \url{https://doi.org/10.1137/0602045}.

\bibitem{hwang1985optimal}
{\sc F.~K. Hwang, J.~Sun, and E.~Y. Yao}, {\em Optimal set partitioning}, SIAM
  Journal on Algebraic Discrete Methods, 6 (1985), pp.~163--170,
  \url{https://doi.org/10.1137/0606015}.

\bibitem{hwang1987non-adaptive}
{\sc F.~K. Hwang and V.~T~S{\'o}s}, {\em Non-adaptive hypergeometric group
  testing}, Studia Scientiarum Mathematicarum Hungarica, 22 (1987),
  pp.~257--263.

\bibitem{hwang2006pooling}
{\sc F.~K.-m. Hwang and D.-Z. Du}, {\em Pooling designs and nonadaptive group
  testing: important tools for DNA sequencing}, vol.~18 of Applied Mathematics,
  World Scientific, 2006.

\bibitem{inan2019group}
{\sc H.~A. Inan, S.~Ahn, P.~Kairouz, and A.~Ozgur}, {\em A group testing
  approach to random access for short-packet communication}, in 2019 IEEE
  International Symposium on Information Theory (ISIT), 2019, pp.~96--100,
  \url{https://doi.org/10.1109/ISIT.2019.8849823}.

\bibitem{inan2017sparse}
{\sc H.~A. Inan, P.~Kairouz, and A.~Ozgur}, {\em Sparse group testing codes for
  low-energy massive random access}, in 2017 55th Annual Allerton Conference on
  Communication, Control, and Computing (Allerton), 2017, pp.~658--665,
  \url{https://doi.org/10.1109/ALLERTON.2017.8262800}.

\bibitem{inan2018energy-limited}
{\sc H.~A. Inan, P.~Kairouz, and A.~Ozgur}, {\em Energy-limited massive random
  access via noisy group testing}, in 2018 IEEE International Symposium on
  Information Theory (ISIT), 2018, pp.~1101--1105,
  \url{https://doi.org/10.1109/ISIT.2018.8437557}.

\bibitem{inan2020sparse}
{\sc H.~A. Inan, P.~Kairouz, and A.~Özgür}, {\em Sparse combinatorial group
  testing}, IEEE Transactions on Information Theory, 66 (2020), pp.~2729--2742,
  \url{https://doi.org/10.1109/TIT.2019.2951703}.

\bibitem{johnson1991inspection}
{\sc N.~L. Johnson, S.~Kotz, and X.~Wu}, {\em Inspection Errors for Attributes
  in Quality Control}, no.~44 in Monographs on Statistics and Applied
  Probability, Chapman \& Hall, 1991.

\bibitem{johnson1999dual}
{\sc W.~O. Johnson and L.~M. Pearson}, {\em Dual screening}, Biometrics, 55
  (1999), pp.~867--873, \url{https://doi.org/10.1111/j.0006-341X.1999.00867.x}.

\bibitem{kallenberg2005probabilistic}
{\sc O.~Kallenberg}, {\em Probabilistic Symmetries and Invariance Principles},
  Probability and Its Applications, Springer, 1~ed., 2005,
  \url{https://doi.org/10.1007/0-387-28861-9}.

\bibitem{karbasi2012sequential}
{\sc A.~Karbasi and M.~Zadimoghaddam}, {\em Sequential group testing with graph
  constraints}, in 2012 IEEE Information Theory Workshop, 2012, pp.~292--296,
  \url{https://doi.org/10.1109/ITW.2012.6404678}.

\bibitem{karp1972reducibility}
{\sc R.~M. Karp}, {\em Reducibility among Combinatorial Problems}, The IBM
  Research Symposia Series, Springer, Boston, MA, 1972, pp.~85--103,
  \url{https://doi.org/10.1007/978-1-4684-2001-2_9}.

\bibitem{katona1973combinatorial}
{\sc G.~O.~H. Katona}, {\em Combinatorial search problems}, in A survey of
  combinatorial theory, North-Holland Publishing Company, 1973, pp.~285--308,
  \url{https://doi.org/10.1016/B978-0-7204-2262-7.50028-4}.

\bibitem{kealy2014capacity}
{\sc T.~Kealy, O.~Johnson, and R.~Piechocki}, {\em The capacity of
  non-identical adaptive group testing}, in 2014 52nd Annual Allerton
  Conference on Communication, Control, and Computing (Allerton), 2014,
  pp.~101--108, \url{https://doi.org/10.1109/ALLERTON.2014.7028442}.

\bibitem{kendall1967finite}
{\sc D.~G. Kendall}, {\em On finite and infinite sequences of exchangeable
  events}, Studia Sci. Math. Hung, 2 (1967), pp.~319--327.

\bibitem{kim2007comparison}
{\sc H.-Y. Kim, M.~G. Hudgens, J.~M. Dreyfuss, D.~J. Westreich, and C.~D.
  Pilcher}, {\em Comparison of group testing algorithms for case identification
  in the presence of test error}, Biometrics, 63 (2007), pp.~1152--1163,
  \url{https://doi.org/10.1111/j.1541-0420.2007.00817.x}.

\bibitem{kingman1978uses}
{\sc J.~F.~C. Kingman}, {\em Uses of exchangeability}, The Annals of
  Probability, 6 (1978), pp.~183--197,
  \url{https://doi.org/10.1214/aop/1176995566}.

\bibitem{korte2018combinatorial}
{\sc B.~Korte and J.~Vygen}, {\em Combinatorial Optimization}, vol.~21 of
  Algorithms and Combinatorics, Springer, sixth~ed., 2018,
  \url{https://doi.org/10.1007/978-3-662-56039-6}.

\bibitem{kumar1970group-testing}
{\sc S.~Kumar}, {\em Group-testing to classify all units in a trinomial
  sample}, Studia Scientiarum Mathematicarum Hungarica, 5 (1970), pp.~247--229.

\bibitem{kumar1970multinomial}
{\sc S.~Kumar}, {\em Multinomial group-testing}, SIAM Journal on Applied
  Mathematics, 19 (1970), pp.~340--350.

\bibitem{kumar1971finding}
{\sc S.~Kumar and M.~Sobel}, {\em Finding a single defective in binomial
  group-testing}, Journal of the American Statistical Association, 66 (1971),
  pp.~824--828, \url{https://doi.org/10.1080/01621459.1971.10482351}.

\bibitem{kumar1973asymptotically}
{\sc S.~Kumar and M.~Sobel}, {\em An asymptotically optimal bayes solution for
  group-testing}, Tech. Report 200, University of Minnesota, 1973.

\bibitem{lamarche-perrin2014generic}
{\sc R.~Lamarche-Perrin, Y.~Demazeau, and J.-M. Vincent}, {\em A generic
  algorithmic framework to solve special versions of the set partitioning
  problem}, in 2014 IEEE 26th International Conference on Tools with Artificial
  Intelligence, 2014, pp.~891--897,
  \url{https://doi.org/10.1109/ICTAI.2014.136}.

\bibitem{landolfi2023symgt}
{\sc N.~C. Landolfi}, {\em Symgt: A package for group testing against symmetric
  distributions}, 2023, \url{https://pypi.org/project/symgt/}.

\bibitem{lau2022model-based}
{\sc I.~Lau, J.~Scarlett, and Y.~Sun}, {\em Model-based and graph-based priors
  for group testing}, IEEE Transactions on Signal Processing, 70 (2022),
  pp.~6035--6050, \url{https://doi.org/10.1109/TSP.2022.3229942}.

\bibitem{lawler1976combinatorial}
{\sc E.~L. Lawler}, {\em Combinatorial Optimization: Networks and Matroids},
  Holt, Rinehart and Winston, 1976.

\bibitem{lee1972dorfman}
{\sc J.-K. Lee and M.~Sobel}, {\em Dorfman and {R1}-type procedures for a
  generalized group-testing problem}, Mathematical Biosciences, 15 (1972),
  pp.~317--340, \url{https://doi.org/10.1016/0025-5564(72)90040-5}.

\bibitem{lendle2012group}
{\sc S.~D. Lendle, M.~G. Hudgens, and B.~F. Qaqish}, {\em Group testing for
  case identification with correlated responses}, Biometrics, 68 (2012),
  pp.~532--540, \url{https://doi.org/10.1111/j.1541-0420.2011.01674.x}.

\bibitem{li1962sequential}
{\sc C.~H. Li}, {\em A sequential method for screening experimental variables},
  Journal of the American Statistical Association, 57 (1962), pp.~455--477,
  \url{https://doi.org/10.1080/01621459.1962.10480672}.

\bibitem{li2014group}
{\sc T.~Li, C.~L. Chan, W.~Huang, T.~Kaced, and S.~Jaggi}, {\em Group testing
  with prior statistics}, in 2014 IEEE International Symposium on Information
  Theory, 2014, pp.~2346--2350,
  \url{https://doi.org/10.1109/ISIT.2014.6875253}.

\bibitem{liang2021neural}
{\sc W.~Liang and J.~Zou}, {\em Neural group testing to accelerate deep
  learning}, in 2021 IEEE International Symposium on Information Theory (ISIT),
  2021, pp.~958--963, \url{https://doi.org/10.1109/ISIT45174.2021.9518038}.

\bibitem{lin2021positively}
{\sc Y.-J. Lin, C.-H. Yu, T.-H. Liu, C.-S. Chang, and W.-T. Chen}, {\em
  Positively correlated samples save pooled testing costs}, IEEE Transactions
  on Network Science and Engineering, 8 (2021), pp.~2170--2182,
  \url{https://doi.org/10.1109/TNSE.2021.3081759}.

\bibitem{lohse2020pooling}
{\sc S.~Lohse, T.~Pfuhl, B.~Berk{\'o}-G{\"o}ttel, J.~Rissland, T.~Gei{\ss}ler,
  B.~G{\"a}rtner, S.~L. Becker, S.~Schneitler, and S.~Smola}, {\em Pooling of
  samples for testing for {SARS-CoV-2} in asymptomatic people}, The Lancet
  Infectious Diseases, 20 (2020), pp.~1231--1232,
  \url{https://doi.org/10.1016/S1473-3099(20)30362-5}.

\bibitem{luo2008neighbor}
{\sc J.~Luo and D.~Guo}, {\em Neighbor discovery in wireless ad hoc networks
  based on group testing}, in 2008 46th Annual Allerton Conference on
  Communication, Control, and Computing, 2008, pp.~791--797,
  \url{https://doi.org/10.1109/ALLERTON.2008.4797638}.

\bibitem{macdonald1998symmetric}
{\sc I.~G. Macdonald}, {\em Symmetric functions and Hall polynomials},
  Clarendon Press, second~ed., 1998.

\bibitem{macula1998probabilistic}
{\sc A.~J. Macula}, {\em Probabilistic nonadaptive and two-stage group testing
  with relatively small pools and {DNA} library screening}, Journal of
  Combinatorial Optimization, 2 (1998), pp.~385--397,
  \url{https://doi.org/10.1023/A:1009732820981}.

\bibitem{malinovsky2019revisiting}
{\sc Y.~Malinovsky and P.~S. Albert}, {\em Revisiting nested group testing
  procedures: New results, comparisons, and robustness}, The American
  Statistician, 73 (2019), pp.~117--125,
  \url{https://doi.org/10.1080/00031305.2017.1366367}.

\bibitem{malioutov2013exact}
{\sc D.~Malioutov and K.~Varshney}, {\em Exact rule learning via boolean
  compressed sensing}, in Proceedings of the 30th International Conference on
  Machine Learning, S.~Dasgupta and D.~McAllester, eds., vol.~28 of Proceedings
  of Machine Learning Research, Atlanta, Georgia, USA, 17--19 Jun 2013, PMLR,
  pp.~765--773.

\bibitem{mallapaty2020mathematical}
{\sc S.~Mallapaty}, {\em The mathematical strategy that could transform
  coronavirus testing}, Nature, 583 (2020), pp.~504--505.

\bibitem{margot2010symmetry}
{\sc F.~Margot}, {\em Symmetry in Integer Linear Programming}, Springer,
  Berlin, Heidelberg, 2010, pp.~647--686,
  \url{https://doi.org/10.1007/978-3-540-68279-0_17}.

\bibitem{mcmahan2012informative}
{\sc C.~S. McMahan, J.~M. Tebbs, and C.~R. Bilder}, {\em Informative dorfman
  screening}, Biometrics, 68 (2012), pp.~287--296,
  \url{https://doi.org/10.1111/j.1541-0420.2011.01644.x}.

\bibitem{mcmahan2011two-dimensional}
{\sc C.~S. McMahan, J.~M. Tebbs, and C.~R. Bilder}, {\em Two-dimensional
  informative array testing}, Biometrics, 68 (2012), pp.~793--804,
  \url{https://doi.org/10.1111/j.1541-0420.2011.01726.x}.

\bibitem{mehravari1986generalized}
{\sc N.~Mehravari}, {\em Generalized binary binomial group testing}, SIAM
  Journal on Algebraic Discrete Methods, 7 (1986), pp.~159--166,
  \url{https://doi.org/10.1137/0607019}.

\bibitem{mutesa2021pooled}
{\sc L.~Mutesa, P.~Ndishimye, Y.~Butera, J.~Souopgui, A.~Uwineza,
  R.~Rutayisire, E.~L. Ndoricimpaye, E.~Musoni, N.~Rujeni, T.~Nyatanyi,
  et~al.}, {\em A pooled testing strategy for identifying {SARS-CoV-2} at low
  prevalence}, Nature, 589 (2021), pp.~276--280,
  \url{https://doi.org/10.1038/s41586-020-2885-5}.

\bibitem{mezard2011group}
{\sc M.~Mézard and C.~Toninelli}, {\em Group testing with random pools:
  Optimal two-stage algorithms}, IEEE Transactions on Information Theory, 57
  (2011), pp.~1736--1745, \url{https://doi.org/10.1109/TIT.2010.2103752}.

\bibitem{cdc2021interim}
{\sc {National Center for Immunization and Respiratory Diseases (U.S.).
  Division of Viral Diseases}}, {\em Interim guidance for use of pooling
  procedures in sars-cov-2 diagnostic, screening, and surveillance testing},
  July 2020, \url{https://stacks.cdc.gov/view/cdc/90937}.

\bibitem{nebenzahl1973finite}
{\sc E.~Nebenzahl and M.~Sobel}, {\em Finite and infinite models for
  generalized group testing with unequal probability of success for each item},
  in Discriminant Analysis and Applications, Academic Press, 1973,
  pp.~239--278, \url{https://doi.org/10.1016/B978-0-12-154050-0.50020-4}.

\bibitem{niepert2014exchangeable}
{\sc M.~Niepert and P.~Domingos}, {\em Exchangeable variable models}, in
  International Conference on Machine Learning, PMLR, 2014, pp.~271--279.

\bibitem{nikolopoulos2021group}
{\sc P.~Nikolopoulos, S.~R. Srinivasavaradhan, T.~Guo, C.~Fragouli, and
  S.~Diggavi}, {\em Group testing for overlapping communities}, in ICC 2021 -
  IEEE International Conference on Communications, 2021, pp.~1--7,
  \url{https://doi.org/10.1109/ICC42927.2021.9500791}.

\bibitem{nikolopoulos2023community-aware}
{\sc P.~Nikolopoulos, S.~R. Srinivasavaradhan, T.~Guo, C.~Fragouli, and S.~N.
  Diggavi}, {\em Community-aware group testing}, IEEE Transactions on
  Information Theory, 69 (2023), pp.~4361--4383,
  \url{https://doi.org/10.1109/TIT.2023.3250119}.

\bibitem{nikpey2023group}
{\sc H.~Nikpey, J.~Kim, X.~Chen, S.~Sarkar, and S.~S. Bidokhti}, {\em Group
  testing with correlation under edge-faulty graphs}, 2023,
  \url{https://arxiv.org/abs/2202.02467}.

\bibitem{fda2020coronavirus}
{\sc {Office of the Commissioner}}, {\em Coronavirus ({COVID}-19) update: {FDA}
  issues first emergency authorization for sample pooling in diagnostic
  testing}, 2020,
  \url{https://www.fda.gov/news-events/press-announcements/coronavirus-covid-19-update-fda-issues-first-emergency-authorization-sample-pooling-diagnostic}.

\bibitem{onn2015some}
{\sc S.~Onn and V.~A. Shlyk}, {\em Some efficiently solvable problems over
  integer partition polytopes}, Discrete Applied Mathematics, 180 (2015),
  pp.~135--140, \url{https://doi.org/10.1016/j.dam.2014.08.015}.

\bibitem{pavlichin2015nearest}
{\sc D.~S. Pavlichin}, {\em Nearest symmetric distributions}, in 2015 IEEE
  International Symposium on Information Theory (ISIT), IEEE, 2015,
  pp.~1630--1634, \url{https://doi.org/10.1109/ISIT.2015.7282732}.

\bibitem{pfeifer1978dorfman-type}
{\sc C.~G. Pfeifer and P.~Enis}, {\em Dorfman-type group testing for a modified
  binomial model}, Journal of the American Statistical Association, 73 (1978),
  pp.~588--592, \url{https://doi.org/10.1080/01621459.1978.10480059}.

\bibitem{pfetsch2019computational}
{\sc M.~E. Pfetsch and T.~Rehn}, {\em A computational comparison of symmetry
  handling methods for mixed integer programs}, Mathematical Programming
  Computation, 11 (2019), pp.~37--93,
  \url{https://doi.org/10.1007/s12532-018-0140-y}.

\bibitem{phatarfod1994use}
{\sc R.~M. Phatarfod and A.~Sudbury}, {\em The use of a square array scheme in
  blood testing}, Statistics in Medicine, 13 (1994), pp.~2337--2343,
  \url{https://doi.org/10.1002/sim.4780132205}.

\bibitem{rewley2020specimen}
{\sc J.~Rewley}, {\em Specimen pooling to conserve additional testing resources
  when persons’ infection status is correlated: A simulation study},
  Epidemiology, 31 (2020), pp.~832--835,
  \url{https://doi.org/10.1097/EDE.0000000000001244}.

\bibitem{rota1964number}
{\sc G.-C. Rota}, {\em The number of partitions of a set}, The American
  Mathematical Monthly, 71 (1964), pp.~498--504,
  \url{https://doi.org/10.1080/00029890.1964.11992270}.

\bibitem{samuels1978exact}
{\sc S.~M. Samuels}, {\em The exact solution to the two-stage group-testing
  problem}, Technometrics, 20 (1978), pp.~497--500,
  \url{https://doi.org/10.1080/00401706.1978.10489706}.

\bibitem{schrijver1998theory}
{\sc A.~Schrijver}, {\em Theory of Linear and Integer Programming}, John Wiley
  \& Sons, 1998.

\bibitem{schrijver2004combinatorial}
{\sc A.~Schrijver}, {\em Combinatorial Optimization: Polyhedra and Efficiency},
  vol.~24 of Algorithms and Combinatorics, Springer, 2004.

\bibitem{sewell2022leveraging}
{\sc D.~K. Sewell}, {\em Leveraging network structure to improve pooled testing
  efficiency}, Journal of the Royal Statistical Society Series C: Applied
  Statistics, 71 (2022), pp.~1648--1662,
  \url{https://doi.org/10.1111/rssc.12594}.

\bibitem{sihag2021adaptive}
{\sc S.~Sihag, A.~Tajer, and U.~Mitra}, {\em Adaptive graph-constrained group
  testing}, IEEE Transactions on Signal Processing, 70 (2022), pp.~381--396,
  \url{https://doi.org/10.1109/TSP.2021.3137026}.

\bibitem{sobel1966binomial1}
{\sc M.~Sobel}, {\em Binomial and hypergeometric group-testing}, Tech.
  Report~76, University of Minnesota, Department of Statistics, July 1966.

\bibitem{sobel1968binomial}
{\sc M.~Sobel}, {\em Binomial and hypergeometric group-testing}, Studia
  Scientiarum Mathematicarum Hungarica, 3 (1968), pp.~19--42.

\bibitem{sobel1975group}
{\sc M.~Sobel and R.~M. Elashoff}, {\em Group testing with a new goal,
  estimation}, Biometrika, 62 (1975), pp.~181--193,
  \url{https://doi.org/10.1093/biomet/62.1.181}.

\bibitem{sobel1959group}
{\sc M.~Sobel and P.~A. Groll}, {\em Group testing to eliminate efficiently all
  defectives in a binomial sample}, Bell System Technical Journal, 38 (1959),
  pp.~1179--1252.

\bibitem{sobel1966binomial2}
{\sc M.~Sobel and P.~A. Groll}, {\em Binomial group-testing with an unknown
  proportion of defectives}, Technometrics, 8 (1966), pp.~631--656,
  \url{https://doi.org/10.1080/00401706.1966.10490408}.

\bibitem{sobel1971symmetric}
{\sc M.~Sobel, S.~Kumar, and S.~Blumenthal}, {\em Symmetric binomial
  group-testing with three outcomes}, in Statistical Decision Theory and
  Related Topics, S.~S. Gupta and J.~Yackel, eds., Academic Press, 1971,
  pp.~119--160, \url{https://doi.org/10.1016/B978-0-12-307550-5.50011-3}.

\bibitem{spang2019unconstraining}
{\sc B.~Spang and M.~Wootters}, {\em {Unconstraining Graph-Constrained Group
  Testing}}, in Approximation, Randomization, and Combinatorial Optimization.
  Algorithms and Techniques (APPROX/RANDOM 2019), D.~Achlioptas and L.~A.
  V{\'e}gh, eds., vol.~145 of Leibniz International Proceedings in Informatics
  (LIPIcs), Dagstuhl, Germany, 2019, Schloss Dagstuhl--Leibniz-Zentrum fuer
  Informatik, pp.~46:1--46:20,
  \url{https://doi.org/10.4230/LIPIcs.APPROX-RANDOM.2019.46}.

\bibitem{srinivasavaradhan2021entropy}
{\sc S.~R. Srinivasavaradhan, P.~Nikolopoulos, C.~Fragouli, and S.~Diggavi},
  {\em An entropy reduction approach to continual testing}, in 2021 IEEE
  International Symposium on Information Theory (ISIT), 2021, pp.~611--616,
  \url{https://doi.org/10.1109/ISIT45174.2021.9518188}.

\bibitem{srinivasavaradhan2022dynamic}
{\sc S.~R. Srinivasavaradhan, P.~Nikolopoulos, C.~Fragouli, and S.~Diggavi},
  {\em Dynamic group testing to control and monitor disease progression in a
  population}, in 2022 IEEE International Symposium on Information Theory
  (ISIT), 2022, pp.~2255--2260,
  \url{https://doi.org/10.1109/ISIT50566.2022.9834823}.

\bibitem{sterrett1957detection}
{\sc A.~Sterrett}, {\em On the detection of defective members of large
  populations}, The Annals of Mathematical Statistics, 28 (1957),
  pp.~1033--1036, \url{https://www.jstor.org/stable/2237067}.

\bibitem{sunjaya2020pooled}
{\sc A.~F. Sunjaya and A.~P. Sunjaya}, {\em Pooled testing for expanding
  {COVID-19} mass surveillance}, Disaster Medicine and Public Health
  Preparedness, 14 (2020), pp.~e42--e43,
  \url{https://doi.org/10.1017/dmp.2020.246}.

\bibitem{tebbs2013two-stage}
{\sc J.~M. Tebbs, C.~S. McMahan, and C.~R. Bilder}, {\em Two-stage hierarchical
  group testing for multiple infections with application to the infertility
  prevention project}, Biometrics, 69 (2013), pp.~1064--1073,
  \url{https://doi.org/10.1111/biom.12080}.

\bibitem{tsirtsi2022pooled}
{\sc S.~Tsirtsis, A.~De, L.~Lorch, and M.~Gomez-Rodriguez}, {\em Pooled testing
  of traced contacts under superspreading dynamics}, PLOS Computational
  Biology, 18 (2022), pp.~1--17,
  \url{https://doi.org/10.1371/journal.pcbi.1010008}.

\bibitem{turner1988calculus}
{\sc D.~W. Turner, F.~E. Tidmore, and D.~M. Young}, {\em A calculus based
  approach to the blood testing problem}, SIAM Review, 30 (1988), pp.~119--122,
  \url{https://doi.org/10.1137/1030005}.

\bibitem{ubaru2017multilabel}
{\sc S.~Ubaru and A.~Mazumdar}, {\em Multilabel classification with group
  testing and codes}, in Proceedings of the 34th International Conference on
  Machine Learning, D.~Precup and Y.~W. Teh, eds., vol.~70 of Proceedings of
  Machine Learning Research, PMLR, 06--11 Aug 2017, pp.~3492--3501,
  \url{https://proceedings.mlr.press/v70/ubaru17a.html}.

\bibitem{ungar1960cutoff}
{\sc P.~Ungar}, {\em The cutoff point for group testing}, Communications on
  Pure and Applied Mathematics, 13 (1960), pp.~49--54.

\bibitem{verdun2021group}
{\sc C.~M. Verdun, T.~Fuchs, P.~Harar, D.~Elbr{\"a}chter, D.~S. Fischer,
  J.~Berner, P.~Grohs, F.~J. Theis, and F.~Krahmer}, {\em Group testing for
  sars-cov-2 allows for up to 10-fold efficiency increase across realistic
  scenarios and testing strategies}, Frontiers in Public Health, 9 (2021),
  \url{https://doi.org/10.3389/fpubh.2021.583377}.

\bibitem{wan2022correlation}
{\sc J.~Wan, Y.~Zhang, and P.~I. Frazier}, {\em Correlation improves group
  testing}, 2022, \url{https://doi.org/10.48550/arXiv.2111.07517},
  \url{https://arxiv.org/abs/2111.07517}.

\bibitem{wang2018group}
{\sc D.~Wang, C.~S. McMahan, J.~M. Tebbs, and C.~R. Bilder}, {\em Group testing
  case identification with biomarker information}, Computational Statistics \&
  Data Analysis, 122 (2018), pp.~156--166,
  \url{https://doi.org/10.1016/j.csda.2018.01.005}.

\bibitem{wang2021performance}
{\sc H.~Wang, C.~A. Hogan, J.~A. Miller, M.~K. Sahoo, C.~Huang, K.~O. Mfuh,
  M.~Sibai, J.~Zehnder, B.~Hickey, N.~Sinnott-Armstrong, et~al.}, {\em
  Performance of nucleic acid amplification tests for detection of severe acute
  respiratory syndrome coronavirus 2 in prospectively pooled specimens},
  Emerging Infectious Diseases, 27 (2021),
  \url{https://doi.org/10.3201/eid2701.203379}.

\bibitem{wang2023tropical}
{\sc H.-P. Wang, R.~Gabrys, and A.~Vardy}, {\em Tropical group testing}, IEEE
  Transactions on Information Theory,  (2023), pp.~1--1,
  \url{https://doi.org/10.1109/TIT.2023.3282847}.

\bibitem{watson1961study}
{\sc G.~S. Watson}, {\em A study of the group screening method}, Technometrics,
  3 (1961), pp.~371--388, \url{https://doi.org/10.1080/00401706.1961.10489954}.

\bibitem{wein1996pooled}
{\sc L.~M. Wein and S.~A. Zenios}, {\em Pooled testing for hiv screening:
  capturing the dilution effect}, Operations Research, 44 (1996), pp.~543--569,
  \url{https://doi.org/10.1287/opre.44.4.543}.

\bibitem{wilks1962mathematical}
{\sc S.~S. Wilks}, {\em Mathematical Statistics}, John Wiley, New York, 1962.

\bibitem{wolf1985born}
{\sc J.~Wolf}, {\em Born again group testing: Multiaccess communications}, IEEE
  Transactions on Information Theory, 31 (1985), pp.~185--191,
  \url{https://doi.org/10.1109/TIT.1985.1057026}.

\bibitem{yelin2020evaluation}
{\sc I.~Yelin, N.~Aharony, E.~S. Tamar, A.~Argoetti, E.~Messer, D.~Berenbaum,
  E.~Shafran, A.~Kuzli, N.~Gandali, O.~Shkedi, T.~Hashimshony,
  Y.~Mandel-Gutfreund, M.~Halberthal, Y.~Geffen, M.~Szwarcwort-Cohen, and
  R.~Kishony}, {\em Evaluation of {COVID-19} {RT-qPCR} test in multi sample
  pools}, Clinical Infectious Diseases, 71 (2020), pp.~2073--2078,
  \url{https://doi.org/10.1093/cid/ciaa531}.

\bibitem{zaheer2017deep}
{\sc M.~Zaheer, S.~Kottur, S.~Ravanbakhsh, B.~Poczos, R.~R. Salakhutdinov, and
  A.~J. Smola}, {\em Deep sets}, in Advances in Neural Information Processing
  Systems, I.~Guyon, U.~V. Luxburg, S.~Bengio, H.~Wallach, R.~Fergus,
  S.~Vishwanathan, and R.~Garnett, eds., vol.~30, Curran Associates, Inc.,
  2017.

\bibitem{zhou2014parallel}
{\sc Y.~Zhou, U.~Porwal, C.~Zhang, H.~Ngo, X.~Nguyen, C.~R\'{e}, and
  V.~Govindaraju}, {\em Parallel feature selection inspired by group testing},
  in Advances in Neural Information Processing Systems, Z.~Ghahramani,
  M.~Welling, C.~Cortes, N.~Lawrence, and K.~Q. Weinberger, eds., vol.~27,
  Curran Associates, Inc., 2014.

\end{thebibliography}

\end{document}


\maketitle

\section{A detailed example}

Here we include some equations and theorem-like environments to show
how these are labeled in a supplement and can be referenced from the
main text.
Consider the following equation:
\begin{equation}
  \label{eq:suppa}
  a^2 + b^2 = c^2.
\end{equation}
You can also reference equations such as \cref{eq:matrices,eq:bb}
from the main article in this supplement.

\lipsum[100-101]

\begin{theorem}
An example theorem.
\end{theorem}

\lipsum[102]

\begin{lemma}
An example lemma.
\end{lemma}

\lipsum[103-105]

Here is an example citation: \cite{KoMa14}.

\section[Proof of Thm]{Proof of \cref{thm:bigthm}}
\label{sec:proof}

\lipsum[106-112]

\section{Additional experimental results}
\Cref{tab:foo} shows additional
supporting evidence.

\begin{table}[htbp]
\footnotesize
  \caption{Example table.}  \label{tab:smfoo}
\begin{center}
  \begin{tabular}{|c|c|c|} \hline
   Species & \bf Mean & \bf Std.~Dev. \\ \hline
    1 & 3.4 & 1.2 \\
    2 & 5.4 & 0.6 \\ \hline
  \end{tabular}
\end{center}
\end{table}

\bibliographystyle{siamplain}
\bibliography{references}